\newtheorem{theorem}{Theorem}
\newtheorem{corollary}{Corollary}
\newtheorem{example}{Example}
\newtheorem{lemma}{Lemma}
\newtheorem{proposition}{Proposition}
\newenvironment{proof}[1][Proof]{\noindent\textbf{#1.} }{\ \rule{0.5em}{0.5em}}
\newcommand{\by}{\mathbf{y}}
\newcommand{\bY}{\mathbf{Y}}
\newcommand{\bx}{\mathbf{x}}
\newcommand{\bX}{\mathbf{X}}
\newcommand{\bZ}{\mathbf{Z}}
\newcommand{\bu}{\mathbf{u}}
\newcommand{\bw}{\mathbf{w}}
\newcommand{\bW}{\mathbf{W}}
\newcommand{\RR}{\mathbb{R}}
\newcommand{\Unif}{\mathop{\mathrm{Unif}}}
\newcommand{\Ind}{\mathds{1}(Y=A(X))}
\newcommand{\adj}{\sim}
\newcommand{\Indnonempty}{\mathds{1}(X\sim Y)}
\newcommand{\Indy}{\mathds{1}(y=A(X))}
\newcommand{\Expt}{\mathbb{E}}
\newcommand{\ind}{\mathds{1}(y=a(x))}
\newcommand{\Nxy}{N(x,y)}
\newcommand{\indnonempty}{\mathds{1}(x\sim y)}
\newcommand{\indnonemptyy}{\mathds{1}(X\sim y)}
\newcommand{\indinD}{\mathds{1}(a\in\Actset(\bx,\by))}
\newcommand{\I}{\mathds{1}}
\newcommand{\Actset}{\mathcal{A}}
\newcommand{\Tx}{\mathcal{S}}
\newcommand{\m}[1]
{\mathcal{#1}}
\DeclareMathOperator*{\argmin}{\arg\!\min}
\def\dotleq{\stackrel{.}{\leq}}
\begin{document}

\title{Mutual Information Bounds via Adjacency Events}

\author{Yanjun Han, Or~Ordentlich,
        and Ofer~Shayevitz,~\IEEEmembership{Senior Member,~IEEE}
\thanks{%Or Ordentlich and Ofer Shayevitz are with the Department of Electrical Engineering - Systems at the Tel Aviv University, \{ordent,ofersha\}@eng.tau.ac.il. Yanjun Han is with the Department of Electrical Engineering, Stanford University, CA, USA, yjhan@stanford.edu.
The work of Y. Han was supported in part by the NSF Center for Science of Information under grant agreement CCF-0939370. The work of O. Ordentlich was supported by the MIT - Technion postdoctoral fellowship. The work of O. Shayevitz was supported by an ERC grant no. 639573, and an ISF grant no. 1367/14.
The material in this paper was presented in part at the International Symposium on Information
Theory (ISIT) 2014 in Honolulu, Hawaii, USA}}

%\markboth{Submitted to IEEE Trans. Info Theory}{~}

\parskip 3pt

\maketitle

\begin{abstract}
The mutual information between two jointly distributed random variables $X$ and $Y$ is a functional of the joint distribution $P_{XY}$, which is sometimes difficult to handle or estimate. A coarser description of the statistical behavior of $(X,Y)$ is given by the marginal distributions $P_X, P_Y$ and the \textit{adjacency} relation induced by the joint distribution, where $x$ and $y$ are adjacent if $P(x,y)>0$. We derive a lower bound on the mutual information in terms of these entities. The bound is obtained by viewing the channel from $X$ to $Y$ as a probability distribution on a set of possible \textit{actions}, where an action determines the output for any possible input, and is independently drawn. We also provide an alternative proof based on convex optimization, that yields a generally tighter bound. Finally, we derive an upper bound on the mutual information in terms of adjacency events between the action and the pair $(X,Y)$, where in this case an action $a$ and a pair $(x,y)$ are adjacent if $y=a(x)$. As an example, we apply our bounds to the binary deletion channel and show that for the special case of an i.i.d. input distribution and a range of deletion probabilities, our lower and upper bounds both outperform the best known bounds for the mutual information.
\end{abstract}

\section{Introduction}
\label{sec:intro}
The mutual information $I(X;Y)$ between two jointly distributed random variables $X$ and $Y$ arises as the fundamental limit in many information theoretic problems. When the alphabets $\m{X}$ and $\m{Y}$ are small, the computation of $I(X;Y)$ can be performed directly. This is the typical scenario when considering e.g. the calculation of capacity of memoryless channels, assuming the optimal input distribution is known. In many cases however, the alphabet may become large or even grow unbounded; this is the case e.g. with the capacity of channels with memory that are information stable \cite{dobrushin63}, where the capacity is essentially given by the limit of $I(X^n;Y^n)/n$, for the optimal input $X^n$. In such cases, it often becomes prohibitively difficult or even virtually impossible to precisely compute the mutual information, hence one must resort to bounding techniques.

In many problems, the marginal distributions of $X$ and $Y$ are simple and the computation of the entropies $H(X)$ and $H(Y)$ is more tractable. In such cases the main obstacle becomes handling the joint distribution and computing the joint (or conditional) entropy. One such prominent example is the binary deletion channel \cite{mitzenmacher09} with deletion probability $d$  and an i.i.d. uniform input process. For this setting, the normalized output entropy is easy to derive and approaches $(1-d)$. However, to evaluate the joint distribution for any given input-output pair, one needs to find the number of different ways the output can be obtained from the input by deleting input bits. This is a difficult combinatorial question, and consequently computing the joint entropy is very challenging. A simpler combinatorial question is to determine whether the output can be obtained from the input by \textit{some} deletion pattern. More generally put, instead of fully characterizing the joint distribution, it is sometimes much easier to characterize its support. Thus, the goal of this work is to provide bounds on the mutual information as a function of the marginals and the joint support. These bounds will be useful when the support is sparse.

In what follows, we assume the alphabets $\m{X},\m{Y}$ are finite unless otherwise stated. We say that $x$ and $y$ are \textit{adjacent} if $P_{XY}(x,y) > 0$, and we denote this relation by $x\adj y$. We call the event $\I(x\adj y)$ an \textit{adjacency event}. Our first main result is the following.
\begin{theorem}
  For any jointly distributed discrete r.vs $(X,Y)$,
  \begin{align}\label{eq:ourbound}
    I(X,Y) \geq -\Expt_Y\log\Expt_X\I(X\adj Y) - \Expt_X\log\Expt_Y\frac{\I(X\adj Y)}{\Expt_X\I(X\adj Y)}
  \end{align}
\label{thm:main}
\end{theorem}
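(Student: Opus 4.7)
The plan is to lower-bound $I(X;Y) = H(Y) - H(Y|X)$ by upper-bounding the conditional entropy $H(Y|X)$ via Gibbs' inequality against a carefully chosen auxiliary channel $Q(\cdot|x)$ on $\m{Y}$. The auxiliary channel must respect the adjacency structure, i.e.\ $Q(y|x)=0$ whenever $x\not\adj y$, so that $P_{Y|X=x}$ is absolutely continuous with respect to $Q(\cdot|x)$. This is where the action viewpoint enters as motivation: thinking of the channel as a random map $A$ with $Y=A(X)$ and $A\perp X$, one is naturally led to a $Q$ that redistributes the $P_Y$ mass across inputs that can produce each given output, rescaled by the size of the corresponding adjacency class.

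Concretely, writing $S(y) := \Expt_X\I(X\adj y)$ and $T(x) := \Expt_Y[\I(x\adj Y)/S(Y)]$, the candidate channel is
\[
Q(y|x)\ :=\ \frac{P_Y(y)\,\I(x\adj y)}{S(y)\,T(x)}.
\]
The normalization $\sum_y Q(y|x) = T(x)/T(x) = 1$ is immediate from the definition of $T(x)$, and $\Expt_X T(X)=1$ serves as an additional consistency check. Since $P_{Y|X=x}$ is supported inside $\{y:x\adj y\}$, we have $\I(x\adj Y)=1$ almost surely under $P_{Y|X=x}$, so Gibbs' inequality yields
\[
H(Y|X{=}x)\ \le\ -\Expt_{Y|X=x}\log Q(Y|x)\ =\ -\Expt_{Y|X=x}\log P_Y(Y)\ +\ \Expt_{Y|X=x}\log S(Y)\ +\ \log T(x).
\]
Averaging over $X$ collapses the first term on the right to $H(Y)$, and rearranging $I(X;Y)=H(Y)-H(Y|X)$ gives precisely the inequality in the theorem.

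The main obstacle is isolating the correct $Q$. The simpler candidates $Q(y|x)\propto\I(x\adj y)$ or $Q(y|x)\propto P_Y(y)\I(x\adj y)$ produce only the weaker single-term bound $I(X;Y)\ge -\Expt_Y\log S(Y)$; the $1/S(y)$ reweighting, which symmetrizes the roles of $X$ and $Y$ in the adjacency graph, is what generates the corrective second term $-\Expt_X\log T(X)$. As a sanity check, this makes the bound tight whenever $Y$ is a deterministic function of $X$: then $S(Y)=P_Y(Y)$ and $T(x)\equiv 1$, so the right-hand side collapses to $H(Y)=I(X;Y)$.
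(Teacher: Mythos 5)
Your proof is correct. The key object $Q(y|x)=P_Y(y)\I(x\adj y)/(S(y)T(x))$ is a bona fide channel (the normalization $\sum_y Q(y|x)=1$ follows from the definition of $T$, and $T(x)>0$ for every $x$ in the support of $P_X$), it dominates $P_{Y|X=x}$ by construction, and Gibbs' inequality followed by averaging over $X$ telescopes exactly to the claimed bound; the tightness check for deterministic channels is also right. This is, however, a genuinely different route from the paper's main proof, which represents the channel by an action $A$ independent of $X$ with $Y=A(X)$, writes $I(X;Y)=H(Y)-H(A)+H(A|X,Y)$, lower-bounds the intrinsic uncertainty $H(A|X,Y)$ via the Donsker--Varadhan variational formula and convexity of relative entropy, specializes to uniform actions, and then needs a rational-approximation-plus-continuity step to reach general channels. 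Your argument compresses all of this into a single application of $D(P_{Y|X=x}\|Q(\cdot|x))\geq 0$ with an explicit $Q$: it is shorter, requires no approximation argument, and extends essentially verbatim to countable alphabets, where the paper must instead fall back on its convex-optimization proof. What the paper's longer detour buys is the more general Theorem~\ref{thm:genLB} (a family of bounds indexed by the choice of action) and, via $I(X;Y)=H(Y)-I(A;X,Y)$, the upper bound of Theorem~\ref{thm:upper}, neither of which your construction produces. It is worth noting that your $Q$ is precisely the primal certificate for the dual point $(\bm{\lambda}^{(1)},\bm{\mu}^{(0)})$ in the paper's Section~\ref{sec:conv}, i.e.\ $e^{\mu_y}=1/S(y)$ and $e^{\lambda_x}=1/T(x)$, so your proof is the first rung of the hierarchy in Theorem~\ref{thm:convex_opt}; iterating the reweighting that takes $\I(x\adj y)$ to $\I(x\adj y)/S(y)$ and then to $\I(x\adj y)/T(x)$ inside the definition of $Q$ would recover the improved bounds $T_X^{(k)},T_Y^{(k)}$ as well.
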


Note that by Jensen's inequality both summands are non-negative, and therefore as a corollary we also get that $I(X,Y) \geq -\Expt_Y\log\Expt_X\I(X\adj Y)$. One can find examples where both bounds are tight, e.g., for the mutual information between input and output of the binary erasure channel. It is instructive to note that the weaker bound can be derived directly by the following argument. Draw an i.i.d. codebook with block length $n$ according to $P_X$, and use it to communicate over a memoryless channel $P_{Y|X}$. Consider the following decoding rule: If the output sequence $y^n$ is $P_Y$-typical and there is a \textit{unique} codeword $x^n$ such that $x_k\adj y_k$ for all $k$, output that codeword. Otherwise, declare an error. Clearly, $\Pr(X_k\adj y_k) = E_X \I(X\sim y_k)$ and thus, assuming that $y^n$ is typical, the probability (averaged over random codebooks) that a specific codeword will satisfy the decoding rule is $\approx \prod_{y\in\mathcal{Y}}\left(\Expt_X \I(X\sim y)\right)^{nP(y)}$. Therefore, by the union bound, any rate below $-\Expt_Y\log\Expt_X\I(X\adj Y)$ can be attained by this strategy with vanishing error probability, and this in turn cannot be larger than the mutual information. A bound of this type was implicitly used in~\cite{dg01,dm06}. Our main contribution is therefore the second term in~\eqref{eq:ourbound}. As we shall see in Section \ref{sec:examples}, this additional term can be significant.

Let us briefly provide the main ideas behind our approach. A channel is traditionally defined via a conditional probability distribution $P_{Y|X}$ of the output given the input. Alternatively, a channel can also be (nonuniquely) defined as a random mapping $Y=A(X)$ from an input alphabet to an output alphabet, where the actual mapping applied to the input, namely the channel \textit{action} $A$, is drawn according to some probability distribution $P_A$ over the set of all possible actions, \textit{independently} of the input (see the functional representation lemma in \cite[Appendix B]{ElGamalKimBook}). Following this paradigm, the mutual information for a given input distribution $P_X$ can be written as
\begin{align}
I(X;Y)&=H(Y) - H(Y|X) \nonumber\\
&=H(Y) - (H(Y,A|X) - H(A|X,Y)) \nonumber\\
&=H(Y) - H(A|X) - H(Y|A,X) + H(A|X,Y)\nonumber\\
&=H(Y) - H(A) + H(A|X,Y)\label{cleancapacity}
\end{align}
where~\eqref{cleancapacity} follows since the action $A$ is statistically independent of the input $X$, and $Y=A(X)$. This holds for any eligible choice of action $A$. A natural quantity to consider is therefore the \textit{intrinsic uncertainty} $H(A|X,Y)$ associated with $A$, that captures the amount of information regarding the channel action revealed by observing its input and output. Note that for any eligible choice of $A$, we have that $I(A;X,Y) = H(A) - H(A|X,Y) = H(Y|X)$ is fixed, but the entropy of the action $H(A)$ and the intrinsic uncertainty associated with the action can vary.

As an example, consider the binary symmetric channel (BSC) with crossover probability $0<p<\tfrac{1}{2}$. A natural choice for the action $A$ is drawing a r.v. $Z\sim \textrm{Bern}(p)$ and setting $A(X) = X \oplus Z$. In this case, the entropy of the action is $H(A)=h(p)$, where $h(\cdot)$ is the binary entropy function, and the intrinsic uncertainty $H(A|X,Y)=0$, since viewing $X$ and $Y$ completely reveals the action (the noise $Z$). Another possible choice for the action $A$ is drawing a ternary r.v. $U$ with $\Pr(U=0)=\Pr(U=1)=p$, and $\Pr(U=2) = 1-2p$, and setting
\begin{align*}
A(X) = U\cdot\I(U\neq 2) + X\cdot\I(U = 2)
\end{align*}
In this case, the entropy of the action is $H(A) = h(2p) + 2p$, and the intrinsic uncertainty is $H(A|X,Y) = (1-p)\cdot h\left(\tfrac{p}{1-p}\right) > 0$, since if $X=Y$ there remains some uncertainty regarding the action. Indeed, it can be directly verified that the identity $h(p) = h(2p) + 2p - (1-p)\cdot h\left(\tfrac{p}{1-p}\right)$ holds.

Following the above, in Section \ref{sec:family-bounds-via} we derive a lower bound on the intrinsic uncertainty for any given choice of the action $A$. This bound is based on an application of the Donsker-Varadhan variational principle. This will immediately translate into  lower bounds on the mutual information. Our general statement, given in Theorem \ref{thm:genLB}, is a family of bounds that depend on the particular choice of the action. While these bounds may be generally difficult to evaluate, we show in Section \ref{sec:bound-via-adjacency} that for any channel $P_{Y|X}$ there always exists a specific choice of action, such that the associated bound depends only on the marginals and the joint support. This yields Theorem \ref{thm:main}.

The proof of Theorem \ref{thm:main} as delineated above in based on information theoretic arguments. Alternatively, the theorem can also be proved more directly using convex optimization techniques. In fact, this alternative approach does not only recover Theorem \ref{thm:main}, but can also yield an increasing sequence of bounds that converges to the best possible lower bound on the mutual information in terms of the marginals $P_X,P_Y$ and the support of $P_{XY}$. Furthermore, while the information theoretic proof applies only to finite alphabets, the convex optimization approach can also handle countably infinite alphabets. This result appears in Theorem \ref{thm:convex_opt}, Section \ref{sec:conv}. We note that the improved bounds obtained by this procedure seem quite difficult to evaluate in general.

Interestingly, while actions were introduced in order to lower bound the mutual information, our results can be trivially leveraged to obtain upper bounds as well.
\begin{theorem}\label{thm:upper}
  Let  $(X,Y)\sim P_X\times P_{Y|X}$ be jointly distributed discrete r.vs. Let $A$ be any action consistent with $P_{Y|X}$, i.e., such that $Y=A(X)$. Then
\begin{align}\label{eq:upper}
I(X; Y)\leq H(Y) + \Expt_A\log \Expt_{X,Y} \I(A\sim (X,Y)) + \Expt_{X,Y}\log\Expt_A\frac{\I(A\sim (X,Y))}{\Expt_{X,Y}\I(A\sim (X,Y))}
\end{align}
\end{theorem}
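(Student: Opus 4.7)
The plan is to obtain the upper bound as an immediate corollary of Theorem \ref{thm:main}, applied to the joint distribution of $A$ and the pair $(X,Y)$, with $A$ playing the role of the first argument and $(X,Y)$ playing the role of the second. The first step is to verify that the induced adjacency event coincides with $\I(A\sim (X,Y))$ as used in the theorem statement. Since by hypothesis $Y=A(X)$ and $A$ is independent of $X$, we have $P_{A,X,Y}(a,x,y)=P_A(a)P_X(x)\I(y=a(x))$, so (on the supports of $P_A$ and $P_X$) the condition $P_{A,X,Y}(a,x,y)>0$ reduces exactly to $y=a(x)$, the combinatorial adjacency defined in the introduction.

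Next I would identify $I(A;(X,Y))$ with $H(Y|X)$. This is essentially a rearrangement of the identity \eqref{cleancapacity}, but can be seen directly: using $A\perp X$ and $H(Y|A,X)=0$,
\begin{equation*}
I(A;(X,Y))=H(X,Y)-H(X,Y|A)=H(X)+H(Y|X)-H(X)-0=H(Y|X).
\end{equation*}
Substituting $(X,Y)$ for the first argument and $A$ for the second in \eqref{eq:ourbound} of Theorem \ref{thm:main} therefore yields
\begin{equation*}
H(Y|X)\;\geq\;-\Expt_A\log\Expt_{X,Y}\I(A\sim (X,Y))\;-\;\Expt_{X,Y}\log\Expt_A\frac{\I(A\sim (X,Y))}{\Expt_{X,Y}\I(A\sim (X,Y))}.
\end{equation*}

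The final step is to multiply by $-1$ and add $H(Y)$ to both sides, which converts the above lower bound on $H(Y|X)$ into the desired upper bound on $I(X;Y)=H(Y)-H(Y|X)$ and reproduces \eqref{eq:upper} verbatim. There is essentially no obstacle: the argument is just Theorem \ref{thm:main} used with a swap of roles, and the conversion from a lower bound on $I(A;(X,Y))$ to an upper bound on $I(X;Y)$ costs only the additive constant $H(Y)$. The only subtlety worth flagging is the correct alignment of the roles in \eqref{eq:ourbound}, so that the outer expectations in its two terms become $\Expt_A$ and $\Expt_{X,Y}$ in the orders required by \eqref{eq:upper}; verifying this bookkeeping takes a single line.
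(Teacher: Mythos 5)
Your proposal is correct and is essentially identical to the paper's own proof, which likewise writes $I(X;Y)=H(Y)-I(A;X,Y)$ via \eqref{cleancapacity} and then applies Theorem~\ref{thm:main} to $I(A;X,Y)$; your only additions are the (harmless) explicit verifications that the induced adjacency event is $\I(y=a(x))$ and that $I(A;X,Y)=H(Y|X)$. One small slip: your opening sentence assigns $A$ the role of the first argument and $(X,Y)$ the second, whereas matching the expectation pattern of \eqref{eq:ourbound} to \eqref{eq:upper} requires the opposite assignment --- which is in fact the one you use in your final substitution, so the argument as executed is fine.
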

\begin{proof}
  By~\eqref{cleancapacity} we have that $I(X;Y) = H(Y) - I(A;X,Y)$. The proof follows by applying Theorem \ref{thm:main} to $I(A;X,Y)$.
\end{proof}

Note that $\I(A\adj (X,Y))$ is an indicator on the event where $A(X)=Y$. In~\eqref{eq:upper}, the expectations are taken with respect to $(X,Y)\sim P_{XY}$ and $A\sim P_A$ independent of $(X,Y)$. Observe also that both the second and third terms in~\eqref{eq:upper} are non-positive, hence the bound holds even if one of them is removed.

Lastly, in Section \ref{sec:examples} we illustrate the applicability of our bounds in several specific examples. In particular, we provide simple examples showing that our bounds are sometimes tight, and demonstrating that the second term in~\eqref{eq:ourbound} can be significant. We then consider the binary deletion channel for which the value of the mutual information is currently unknown for any nontrivial input process. For an i.i.d. uniform input, we evaluate our lower and upper bounds, and show that they both outperform the best known bounds on the mutual information. Finally, we draw a relation between the upper bound from Theorem~\ref{thm:upper} and a recent conjecture of Courtade and Kumar~\cite{ck14}. As all examples we consider in this paper involve binary channels, unless stated otherwise, all logarithms are taken to base $2$.
%For a Markov input, we evaluate our upper bound and show that for high deletion probabilities, it is smaller than the best upper bounds on the capacity of the deletion channel.

\section{A Family of Bounds via Actions}\label{sec:family-bounds-via}

In this section we define a channel by its action on its input, and develop general lower bounds on the mutual information between the input and output in terms of the channel action, by bounding the associated intrinsic uncertainty defined below.

%If the channel is information stable~\cite{dobrushin1963}, as is the case for most channels of interest, then its capacity is given by the normalized mutual information maximized w.r.t. the input process, in the limit of large block length.

\subsection{Channels via Actions}
Let $\mathcal{X,Y}$ be discrete alphabets. Any channel $P_{Y|X}$ from $\mathcal{X}$ to $\mathcal{Y}$ can be (nonuniquely) defined by a probability distribution $P_A$ on a set $\Actset$ of mappings from $\mathcal{X}\mapsto\mathcal{Y}$, to which we refer to below as \emph{actions}. Each action $a(\cdot)\in\Actset$ is defined for all possible inputs, and the channel action is chosen independently of the input, yielding the output $Y=A(X)\in\mathcal{Y}$.

For any eligible choice of action $A$, the \textit{intrinsic uncertainty} of the channel with respect to the input distribution $P_X$ is defined to be $H(A|X,Y)$. Note that while the intrinsic uncertainty may depend on the choice of $A$, the difference $H(A) - H(A|X,Y)$, which was shown in Section~\ref{sec:intro} to be equal to $H(Y|X)$, does not; we therefore have the freedom to choose the action distribution that is most convenient to work with.

\begin{example}[Generic action set]
For any channel $P_{Y|X}$ we can always generate the action according to the following procedure. Let $\m{A}$ consist of all $|\m{Y}|^{|\m{X}|}$ functions from $\m{X}\mapsto\m{Y}$, and for any $a\in\m{A}$ set $P_A(a)=\prod_{x\in\m{X}}P_{Y|X}(a(x)|x)$. Drawing $A$ according to $P_A$, statistically independent of $X$, and setting $Y=A(X)$, is equivalent to drawing in advance a sequence of statistically independent r.vs $\{Y_x\}_{x\in \m{X}}$, where $Y_x\sim P_{Y|X}(\cdot|x)$, and then when $X$ is revealed, outputting only the corresponding $Y_X$. Thus, the above $\m{A}$ and $P_A$ are consistent with $P_{Y|X}$, i.e., they describe the channel $P_{Y|X}$.
\label{ex:genericaction}
\end{example}

%For most channels of interest, the set of possible actions $\Actset^{(n)}$ extends naturally as $n$ grows. We shall therefore omit the index $n$ and simply denote this set by $\Actset$.
We further note that it is always possible to construct an action set with less than $|\m{X}|\cdot|\m{Y}|$ actions, see the functional representation lemma in~\cite[Appendix B]{ElGamalKimBook}. Moreover, in many cases there exist ``natural'' choices of an action that describes the channel. In Section \ref{sec:intro} we described such choices for the BSC. Below we provide a few more examples.

\begin{example}[Z Channel]
The (symmetric) Z channel has a binary input $X$ and binary output $Y$, such that $\Pr(Y=0|X=0)=1$ and $\Pr(Y=0|X=1)=\Pr(Y=1|X=1)=\tfrac{1}{2}$. A natural choice for the action $A$ is taking the action set $\m{A}$ to consist of the two actions $a_1(x)=x$ and $a_2(x)=0$ with probability assignment $p(a_1)=p(a_2)=\tfrac{1}{2}$.
\end{example}

\begin{example}[Deletion Channel]\label{ex:del}
In a deletion channel, each transmitted symbol is either deleted or received uncorrupted. Assuming the input to the channel is an $n$-dimensional vector $\bX$, the set $\Actset$ includes $2^n$ actions, each corresponding to a different subset of the input indices $[1:n]$ marked for deletion. In an i.i.d. deletion model symbols are independently deleted with probability $d$. Therefore the probability of an action $a$ that deletes exactly $w$ bits is $P(a) = d^w(1-d)^{n-w}$. Different actions applied to the same input may result in the same output. For example, if $\bx=01100$ we may get the output $\by=110$ if either the first and fourth symbols or the first and fifth symbols were deleted. Therefore, the intrinsic uncertainty $H(A|\bX,\bY)$ is generally positive.
\end{example}

\vspace{1mm}

\begin{example}[Trapdoor Channel]
The trapdoor channel is a simple finite-state binary channel, defined as follows. Balls labeled ``0'' or ``1'' are used to communicate through the channel. The channel starts with a ball already in it, referred to as the initial state. On each channel use, a ball is inserted into the channel by the transmitter, and one of the two balls in the channel is emitted with equal probability. The ball that is not emitted remains inside for the next channel use. In this model, the channel's action consists of choosing the initial state and deciding for each channel use whether to emit the ball that was already inside the channel or the ball that has just entered. Since an input $\bx$ can be mapped to an output $\by$ via multiple actions, the intrinsic uncertainty is generally positive.
\end{example}

\subsection{Bounds}

Our main tool in lower bounding the intrinsic uncertainty is the variational formula of Donsker and Varadhan (See, e.g.,~\cite[Chapter 1.4]{Dupuis_and_Ellis}). We write $D(P\|Q)$ for the relative entropy between the distributions $P,Q$, and $Q\ll P$ if $P(x)=0$ implies $Q(x)=0$.

\begin{lemma}[Donsker-Varadhan]
For any distribution $P$ and any nonnegative function $f(x)$ for which $\mathbb{E}_P\log f(X)$ is finite,
\begin{align}
\mathbb{E}_P\log f(X)=\min_{Q\ll P}\log\mathbb{E}_Q f(X)+D(P\|Q),\label{laplace}
\end{align}
and the minimum is uniquely attained by
\begin{align}
Q^*(x)=\frac{P(x)/f(x)}{\mathbb{E}_P (1/f(X))},
\end{align}
where by convention we set $1/f(x)=0$ if $f(x)=0$.%\footnote{In the sequel we only use the readily verified fact that~\eqref{laplace} is satisfied with equality for $Q=Q^*$. Nevertheless, we note that the general form of~\eqref{laplace} can be also used for obtaining upper bounds on the intrinsic certainty.}
\end{lemma}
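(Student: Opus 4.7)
The plan is to prove the inequality $\mathbb{E}_P \log f(X) \le \log \mathbb{E}_Q f(X) + D(P\|Q)$ for every $Q \ll P$, and then verify that the claimed $Q^*$ attains this bound with equality and is the unique minimizer. The central trick will be to recognize the slack on the right-hand side as a single relative entropy, so that non-negativity is immediate from Gibbs' inequality; the attainment and uniqueness then come for free.

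First note that finiteness of $\mathbb{E}_P \log f(X)$ forces $f(x)>0$ whenever $P(x)>0$, so $\log f$ is well-defined on the support of $P$ and we may restrict all sums to that support. For any $Q \ll P$ with $0 < \mathbb{E}_Q f(X) < \infty$, introduce the \emph{tilted} probability measure
\begin{align*}
\tilde Q(x) := \frac{Q(x)\, f(x)}{\mathbb{E}_Q f(X)},
\end{align*}
which is a legitimate distribution on the support of $P$. Combining the three terms into one logarithm gives
\begin{align*}
\log \mathbb{E}_Q f(X) + D(P\|Q) - \mathbb{E}_P \log f(X) \;=\; \sum_x P(x) \log \frac{P(x)\, \mathbb{E}_Q f(X)}{Q(x)\, f(x)} \;=\; D(P \,\|\, \tilde Q) \;\ge\; 0,
\end{align*}
with equality iff $P = \tilde Q$ pointwise on the support of $P$. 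Degenerate cases where $\mathbb{E}_Q f(X) = 0$ or where $f$ vanishes on part of the support of $Q$ are handled directly: either the right-hand side of \eqref{laplace} is $+\infty$, or one may first restrict $Q$ to $\{x : f(x)>0\}$ and reduce to the non-degenerate case.

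For the equality case, $P(x) = \tilde Q(x)$ means $Q(x) \propto P(x)/f(x)$, and the normalization constant must be $1/\mathbb{E}_P(1/f(X))$, producing exactly $Q^*$. A direct substitution then confirms $\log \mathbb{E}_{Q^*} f(X) + D(P\|Q^*) = -\log \mathbb{E}_P(1/f(X)) + \mathbb{E}_P \log f(X) + \log \mathbb{E}_P(1/f(X)) = \mathbb{E}_P \log f(X)$, so the infimum is attained and is indeed a minimum; uniqueness of the minimizer follows from the strict convexity of $Q \mapsto D(P\|\tilde Q(Q))$ at $P$. The only mildly delicate point, not really an obstacle, is ensuring $Q^*$ is well-defined, which requires $\mathbb{E}_P(1/f(X)) < \infty$; this is automatic in the finite-alphabet regime where the lemma will be applied, and in the countable setting one either imposes an integrability hypothesis or realizes the minimum as a limit along truncations of the support.
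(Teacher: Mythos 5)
Your proof is correct and takes essentially the same route as the paper's: both arguments identify the slack $\log\mathbb{E}_Q f(X)+D(P\|Q)-\mathbb{E}_P\log f(X)$ as a relative entropy against an $f$-tilted measure and conclude from its non-negativity, the paper phrasing this via the log-sum inequality after inserting the candidate $Q^*$, and you via Gibbs' inequality applied to $D(P\|\tilde Q)$ with $\tilde Q \propto Qf$. The only cosmetic difference is that you let $Q^*$ emerge from the equality condition rather than verifying a guessed minimizer, and your handling of the degenerate and normalization issues is sound.
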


For completeness, we bring the proof of this lemma.

\begin{proof}
Let $Q^*(x)$ be as above. For any distribution $Q$ we have
\begin{align}
D(P\|Q)+\log\mathbb{E}_Q f(X)&=\mathbb{E}_P\log\frac{P}{Q}+\log\mathbb{E}_Q f(X)\nonumber\\
&=\mathbb{E}_P\log\frac{Q^*}{Q}+\mathbb{E}_P\log\frac{P}{Q^*}+\log\mathbb{E}_Q f(X)\nonumber\\
&={\sum_{x} P(x)\log\frac{P(x)}{Q(x)f(x)\mathbb{E}_P(1/f(X))}+\mathbb{E}_P\log\frac{P(X)f(X)\mathbb{E}_P(1/f(X))}{P(X)}+\log\mathbb{E}_Q f(X)}\nonumber\\
&\stackrel{(a)}{\geq} {\left(\sum_x P(x)\right)\log\frac{\sum_x P(x)}{\sum_x Q(x)f(x)\mathbb{E}_P(1/f(X))}+\mathbb{E}_P \log f(X) +\log\mathbb{E}_P\frac{1}{f(X)}+\log\mathbb{E}_Q f(X)}\nonumber\\
&=\mathbb{E}_P \log f(X)\nonumber
\end{align}
where $(a)$ follows from the log-sum inequality~\cite[Chapter 2.7]{cover} which is tight if and only if $Q(x)=Q^*(x)$.
\end{proof}

We would like to obtain an alternative expression for
\begin{align}
H(A|X,Y)=\mathbb{E}\log \frac{1}{P(A|X,Y)},\label{GeneralEntropy}
\end{align}
where the expectation is taken with respect to the joint distribution
\begin{align}
P(x,y,a)&=P(x)P(a|x)P(y|x,a)\nonumber\\
&=P(x)P(a)\ind,\nonumber
\end{align}
and $\mathds{1}(B)$ is an indicator function for the event $B$. For brevity, we sometimes refer to this distribution as $P$.

Define the distribution
\begin{align}
Q(x,y,a)\triangleq\frac{P(x,y,a)P(a|x,y)}{\mathbb{E}_P P(A|X,Y)},\label{Qdist}
\end{align}
which we sometimes refer to as $Q$. Using the Donsker-Varadhan variational principle with $f(x,y,a)=1/P(a|x,y)$, the expectation from~\eqref{GeneralEntropy} can be written as
\begin{align}
\mathbb{E}\log \frac{1}{P(A|X,Y)}&=\log\mathbb{E}_Q \frac{1}{P(A|X,Y)}+D(P\|Q)\nonumber\\
&=\log\mathbb{E}_Q \frac{1}{P(A|X,Y)}+D\left(P_Y\|Q_Y\right)+D\left(P_{X,A|Y}\|Q_{X,A|Y} \mid P_Y\right),\label{LaplaceApp}
\end{align}
where~\eqref{LaplaceApp} follows from the chain rule of relative entropy. The marginal distribution $Q(y)$ is given by
\begin{align}
  Q(y)&=\sum_{x,a}Q(x,y,a)\nonumber\\
&=\frac{1}{\mathbb{E}_P P(A|X,Y)}\sum_{x,a}P(x)P(a)\ind P(a|x,y)\nonumber\\
&=\frac{\mathbb{E}_{X,A} P(A|X,y)}{\mathbb{E}_P P(A|X,Y)},\label{Qy}
\end{align}
where in~\eqref{Qy} we have used the fact that $P(a|x,y)=0$ whenever $y\neq a(x)$. Thus,
\begin{align}
D\left(P_Y\|Q_Y\right)&=\mathbb{E}_{Y}\log\left(\frac{P(Y)\mathbb{E}_P P(A|X,Y)}{\mathbb{E}_{X,A} P(A|X,Y)}\right)\nonumber\\
&=-H(Y)+\log\mathbb{E}_P P(A|X,Y)-\mathbb{E}_{Y}\log\mathbb{E}_{X,A} P(A|X,Y).\label{Dy}
\end{align}
In addition,
\begin{align}
\log\mathbb{E}_Q \frac{1}{P(A|X,Y)}&=\log\sum_{x,y,a} \frac{Q(x,y,a)}{P(a|x,y)}\nonumber\\
&=-\log\mathbb{E}_P P(A|X,Y).\label{EQP}
\end{align}
Substituting~\eqref{Dy} and~\eqref{EQP} into~\eqref{LaplaceApp} yields
\begin{align}
H(A|X,Y)&=-H(Y)-\mathbb{E}_{Y}\log\mathbb{E}_{X,A} P(A|X,Y)+D\left(P_{X,A|Y}\|Q_{X,A|Y} \mid P_Y\right).\label{entropy2}
\end{align}
We are left with the task of evaluating the conditional relative entropy in~\eqref{entropy2}. The conditional distributions that participate in this term are given by
\begin{align}
P(x,a|y)&=P(x)P(a)\frac{\ind}{E_{X,A}\Indy}\label{Pxd}\\
Q(x,a|y)&=P(x)P(a)\frac{P(a|x,y)}{E_{X,A}P(A|X,y)}\label{Qxd}
\end{align}
and therefore
\begin{align}
D\left(P_{X,A|Y}\|Q_{X,A|Y} \mid P_Y\right)=\mathbb{E}_P\log\left(\frac{\Ind}{E_{X,A}\Ind}\cdot\frac{E_{X,A}P(A|X,Y)}{P(A|X,Y) } \right).\label{condDivergence}
\end{align}
Unfortunately, an exact computation of~\eqref{condDivergence} involves the computation of $\mathbb{E}_P\log (1/P(A|X,Y))$, which is the exact technical difficulty we are trying to avoid. Instead, we lower bound~\eqref{condDivergence} using the convexity of relative entropy, i.e.,
\begin{align}
D\left(P_{X,A|Y}\|Q_{X,A|Y} \mid P_Y\right) \geq D\left(P_{X,A}\|\tilde{Q}_{X,A}\right),\label{marginalizedDivergence}
\end{align}
where
\begin{align}
\tilde{Q}(x,a)&=\sum_{y}P(y)Q(x,a|y)\nonumber\\
&=P(x,a)\mathbb{E}_{Y}\frac{P(a|x,Y)}{E_{X,A}P(A|X,Y)}.\label{Qtildey}
\end{align}
Note that other properties of relative entropy, such as the data-processing inequality or Pinsker's inequality, could potentially be useful for bounding~\eqref{condDivergence}. Combining~\eqref{marginalizedDivergence} and~\eqref{Qtildey} gives,
\begin{align}
D\left(P_{X,A|Y}\|Q_{X,A|Y} \mid P_Y\right)\geq -\mathbb{E}_{X,A}\log\mathbb{E}_{Y}\frac{P(A|X,Y)}{E_{X,A}P(A|X,Y)}.\label{divergenceBound}
\end{align}
Substituting~\eqref{divergenceBound} into~\eqref{entropy2} and using~\eqref{cleancapacity} yields the following.

\vspace{1mm}

\begin{theorem}
  Let  $(X,Y)\sim P_X\times P_{Y|X}$ be jointly distributed discrete r.vs. Let $A$ be any action consistent with $P_{Y|X}$, i.e., such that $Y=A(X)$. Then
\begin{align}
I(X; Y)\geq-H(A)-\mathbb{E}_{Y}\log\mathbb{E}_{X,A} P(A|X,Y)-\mathbb{E}_{X,A}\log\mathbb{E}_{Y}\frac{P(A|X,Y)}{E_{X,A}P(A|X,Y)}.\label{entropyGenBound}
\end{align}
\label{thm:genLB}
\end{theorem}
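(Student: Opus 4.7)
The plan is to deduce the bound from identity~\eqref{cleancapacity}, which already gives $I(X;Y) = H(Y) - H(A) + H(A|X,Y)$ for any action $A$ consistent with $P_{Y|X}$. Since $H(Y)$ cancels in~\eqref{entropyGenBound}, the task reduces to lower bounding the intrinsic uncertainty
\begin{align*}
H(A|X,Y) = \Expt_P \log\frac{1}{P(A|X,Y)}.
\end{align*}
The natural tool here is the Donsker-Varadhan variational principle applied with $f = 1/P(A|X,Y)$: for any reference distribution $Q$ one has $H(A|X,Y) = \log \Expt_Q(1/P(A|X,Y)) + D(P\|Q)$. Rather than minimize over $Q$, I would pick $Q(x,y,a) \propto P(x,y,a)\,P(a|x,y)$, so that the logarithmic term collapses to $-\log \Expt_P P(A|X,Y)$ and the divergence $D(P\|Q)$ becomes amenable to further manipulation.

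Next, I would split this divergence by the chain rule, conditioning on $Y$: $D(P\|Q) = D(P_Y\|Q_Y) + D(P_{X,A|Y}\|Q_{X,A|Y}\mid P_Y)$. A direct computation of the $Y$-marginal shows that $Q(y) \propto \Expt_{X,A} P(A|X,y)$, and substituting into $D(P_Y\|Q_Y)$ produces the term $-H(Y) - \Expt_Y \log \Expt_{X,A} P(A|X,Y)$, together with a $+\log \Expt_P P(A|X,Y)$ that cancels against the logarithmic term above. Thus the only remaining piece is the conditional divergence $D(P_{X,A|Y}\|Q_{X,A|Y}\mid P_Y)$, which will account for the third summand in the target bound.

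The hard part will be handling this conditional divergence: evaluating it exactly reintroduces $\Expt_P \log(1/P(A|X,Y))$, which is the very quantity we are trying to avoid. To circumvent this, I would invoke convexity of relative entropy (equivalently, data processing under the marginalization map $(X,A,Y)\mapsto (X,A)$) to obtain the inequality $D(P_{X,A|Y}\|Q_{X,A|Y}\mid P_Y) \geq D(P_{X,A}\|\widetilde{Q}_{X,A})$, with $\widetilde{Q}(x,a) = \Expt_Y Q(x,a|Y) = P(x,a)\,\Expt_Y[P(a|x,Y)/\Expt_{X,A} P(A|X,Y)]$. Expanding this last divergence yields exactly $-\Expt_{X,A}\log \Expt_Y[P(A|X,Y)/\Expt_{X,A} P(A|X,Y)]$, which combined with the previous paragraph and~\eqref{cleancapacity} gives~\eqref{entropyGenBound}. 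This convexity step is the sole source of slack in the argument, and it is where sharper variants (for instance, using data-processing through different sufficient statistics or Pinsker-type inequalities) might yield tighter families of lower bounds.
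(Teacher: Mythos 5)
Your proposal follows the paper's proof essentially step for step: the identity $I(X;Y)=H(Y)-H(A)+H(A|X,Y)$, Donsker--Varadhan with $f=1/P(A|X,Y)$ and the tilted reference measure $Q(x,y,a)\propto P(x,y,a)P(a|x,y)$, the chain-rule split of $D(P\|Q)$ through $Y$, and convexity of relative entropy to replace the conditional divergence by the marginalized one, yielding the third term. The only nitpick is that the Donsker--Varadhan relation holds with equality only at the minimizing $Q$ (for general $Q$ it is an upper bound on $\mathbb{E}_P\log f$, which would point the wrong way for a lower bound on $H(A|X,Y)$); since your chosen $Q$ is precisely that minimizer, the argument is unaffected.
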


\section{A Bound via Adjacency Events}\label{sec:bound-via-adjacency}

An action $A$ is called \emph{uniform} if all actions in its support $\m{A}$ are equiprobable, i.e.,
\begin{align}
P(a)=\begin{cases}
\frac{1}{|\m{A}|} & a\in\m{A}\\
0 & a\notin\m{A}
\end{cases}.\nonumber
\end{align}
At this point, we restrict our attention to this class of actions, for which the bound in Theorem~\ref{thm:genLB} takes a particularly simpler form that depends only on the marginal distributions of $X$ and $Y$ and their joint support. We then show that any channel can be essentially characterized by a uniform action, which in turn proves Theorem~\ref{thm:main}.

For any $x\in\mathcal{X}$ and $y\in\mathcal{Y}$ let
\begin{align}
\Actset(x,y)\triangleq\left\{a \ : \ a(x)=y \right\}\label{DxyDef}
\end{align}
be the set of all possible actions in $\Actset$ that map the input $x$ to the output $y$. Denote the cardinality of this set by $\Nxy\triangleq|\Actset(x,y)|$.
\begin{proposition}
If $A$ is a uniform action, then $A$ conditioned on $X$ and $Y$ is uniformly distributed over the set $\Actset(X,Y)$.\footnote{Note that the converse is not generally true. As a counterexample, consider the BSC with the action $A(X)=X\oplus Z$ where $Z\sim\textrm{Bern}(p)$.}
\label{prop:condDxy}
\end{proposition}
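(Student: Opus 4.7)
The plan is to compute $P(a \mid x, y)$ directly from the joint factorization $P(x,y,a) = P(x)\,P(a)\,\mathds{1}(y=a(x))$ and show that it evaluates to $1/N(x,y)$ for every $a \in \Actset(x,y)$, independent of the particular $a$, whence the conditional is uniform on $\Actset(x,y)$.

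First I would fix $x \in \mathcal{X}$ and $y \in \mathcal{Y}$ with $P(x,y) > 0$, and note that by definition $P(a \mid x, y) = 0$ for any $a \notin \Actset(x,y)$, since the indicator $\mathds{1}(y = a(x))$ vanishes. Hence the conditional is supported on $\Actset(x,y)$, and it remains to determine its values there. The second step is to compute the marginal $P(x,y)$ by summing over $a$: using $P(a) = 1/|\Actset|$ uniformly on $\Actset$ and the definition of $N(x,y)$ as the number of $a \in \Actset$ satisfying $a(x) = y$, one obtains
\begin{align*}
P(x,y) = \sum_{a \in \Actset} P(x)\,P(a)\,\mathds{1}(y=a(x)) = \frac{P(x)\,N(x,y)}{|\Actset|}.
\end{align*}

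The third step is the division itself: for any $a \in \Actset(x,y)$,
\begin{align*}
P(a \mid x, y) = \frac{P(x)\,P(a)\,\mathds{1}(y=a(x))}{P(x,y)} = \frac{P(x)/|\Actset|}{P(x)\,N(x,y)/|\Actset|} = \frac{1}{N(x,y)},
\end{align*}
which does not depend on the particular $a \in \Actset(x,y)$. This establishes uniformity on $\Actset(x,y)$, completing the argument.

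There is essentially no obstacle here; the proposition is a direct bookkeeping consequence of the definition $P(x,y,a) = P(x)P(a)\mathds{1}(y=a(x))$ combined with the uniformity of $P_A$. The only point worth flagging is that the conclusion fails if $P(x,y) = 0$, in which case $\Actset(x,y)$ is empty and the conditional is undefined; this is a harmless degenerate case. I would also include a brief remark reconciling with the footnote: the converse fails because a non-uniform $P_A$ can still induce a uniform conditional on $\Actset(x,y)$ by an appropriate symmetry between $P_A$ and the structure of the maps, as in the BSC example cited.
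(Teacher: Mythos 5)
Your proof is correct and follows essentially the same route as the paper: both compute the conditional by dividing the joint $P(x)P(a)\mathds{1}(y=a(x))$ by its marginal over $a$ (the paper phrases this via Bayes' rule as $P(y|x,a)P(a)/P(y|x)$, which is identical after the factor $P(x)$ cancels) and then invoke $P(a)=1/|\Actset|$ to obtain $\mathds{1}(a\in\Actset(x,y))/N(x,y)$. No gaps.
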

\begin{proof}
\begin{align}
P(a|x,y)&=\frac{P(x,y|a)P(a)}{P(x,y)}\nonumber\\
&=\frac{P(y|x,a)P(a)}{P(y|x)}\nonumber\\
&=\frac{\ind P(a)}{\sum_{a\in\Actset(x,y)}P(a)}\nonumber\\
&\stackrel{(a)}{=}\frac{\tfrac{1}{|\m{A}|}\I\left(a\in\m{A}(x,y)\right)}{\tfrac{1}{|\m{A}|}\Nxy}\nonumber\\
&=\frac{\I\left(a\in\m{A}(x,y)\right)}{\Nxy},\label{condDxy}
\end{align}
where $(a)$ follows from $\ind=\indinD$ and since $P(a)=\tfrac{1}{|\m{A}|}$ for all $a\in\m{A}$.
\end{proof}

\vspace{1mm}

\begin{lemma}
Suppose $P_{Y|X}$ can be represented by a uniform action $A$. Then, for any input distribution $P_X$
\begin{align}
-\mathbb{E}_{Y}\log\mathbb{E}_{X,A} P(A|X,Y)&=H(A)-\mathbb{E}_{Y}\log\mathbb{E}_{X}\Indnonempty.\label{Eylog}
\end{align}
\label{lem:FirstTermUnif}
\end{lemma}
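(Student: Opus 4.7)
The plan is a direct computation: substitute the explicit conditional from Proposition~\ref{prop:condDxy} into $\mathbb{E}_{X,A}P(A|X,Y)$, collapse the sum over actions using the definition $N(x,y)=|\Actset(x,y)|$, and then take the logarithm, using that a uniform action on $\m{A}$ satisfies $H(A)=\log|\m{A}|$.

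First I would fix $y\in\m{Y}$ and compute the inner expectation. Using Proposition~\ref{prop:condDxy}, for a uniform action we have $P(a\mid x,y) = \I(a\in\Actset(x,y))/N(x,y)$ (with the convention $0/0=0$, consistent with the convention $1/f(x)=0$ whenever $f(x)=0$ that was used in the Donsker--Varadhan lemma). Since $P(a)=1/|\m{A}|$ for every $a\in\m{A}$, the sum over $a$ factors as
\begin{align*}
\mathbb{E}_{X,A}P(A\mid X,y) \;=\; \sum_{x}P(x)\sum_{a\in\m{A}}\frac{1}{|\m{A}|}\cdot\frac{\I(a\in\Actset(x,y))}{N(x,y)}.
\end{align*}

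The key step is that, for each $x$, the inner sum over $a$ counts the elements of $\Actset(x,y)$ and divides by $N(x,y)$, so it equals $1$ when $N(x,y)>0$ and $0$ when $N(x,y)=0$. In both cases this is exactly $\I(x\adj y)$, so
\begin{align*}
\mathbb{E}_{X,A}P(A\mid X,y) \;=\; \frac{1}{|\m{A}|}\sum_{x}P(x)\I(x\adj y) \;=\; \frac{1}{|\m{A}|}\,\mathbb{E}_{X}\I(X\adj y).
\end{align*}

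Finally I take $-\log$ of the above, pull out the constant $1/|\m{A}|$, and take the expectation over $Y$:
\begin{align*}
-\mathbb{E}_{Y}\log \mathbb{E}_{X,A}P(A\mid X,Y) \;=\; \log|\m{A}|\;-\;\mathbb{E}_{Y}\log\mathbb{E}_{X}\I(X\adj Y).
\end{align*}
Since $A$ is uniform on $\m{A}$, $H(A)=\log|\m{A}|$, which gives the claimed identity. There is no real obstacle here; the only thing to watch is the boundary case $N(x,y)=0$, which needs the $0/0=0$ convention inherited from the Donsker--Varadhan setup to make the collapse of the inner sum valid for every $x$ simultaneously.
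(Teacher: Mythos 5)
Your proof is correct and follows essentially the same route as the paper: substitute the uniform conditional $P(a\mid x,y)=\I(a\in\Actset(x,y))/N(x,y)$ from Proposition~\ref{prop:condDxy}, collapse the sum over actions to $\I(x\adj y)$, and use $H(A)=\log|\m{A}|$. Your explicit handling of the $N(x,y)=0$ case is a slightly more careful rendering of the paper's $\frac{\Nxy}{\Nxy}\indnonempty$ step, but the argument is the same.
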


\begin{proof}
Using Proposition~\ref{prop:condDxy},
\begin{align}
\mathbb{E}_{X,A} P(A|X,y)&=\sum_{x}P(x)\sum_{a}P(a)\frac{\I\left(a\in\m{A}(x,y)\right)}{\Nxy}\nonumber\\
&=\frac{1}{|\m{A}|}\sum_{x}P(x)\frac{\Nxy}{\Nxy}\indnonempty\nonumber\\
&=\frac{1}{|\m{A}|}\mathbb{E}_{X}\indnonemptyy.\label{ExdCond}
\end{align}
Thus,
\begin{align}
-\mathbb{E}_{Y}\log\mathbb{E}_{X,A} P(A|X,Y)&=-\mathbb{E}_{Y}\log \frac{1}{|\m{A}|}-\mathbb{E}_{Y}\log\mathbb{E}_{X}\Indnonempty\nonumber\\
&=\log{|\m{A}|}-\mathbb{E}_{Y}\log\mathbb{E}_{X}\Indnonempty.\nonumber
\end{align}
The lemma follows since $H(A)=\log{|\m{A}|}$ by the assumption that $A$ is a uniform action.
\end{proof}

\vspace{1mm}

The next lemma lower bounds the last term in~\eqref{entropyGenBound} for channels with a uniform action $A$.

\begin{lemma}
Suppose $P_{Y|X}$ can be represented by a uniform action $A$. Then, for any input distribution $P_X$
\begin{align}
-\mathbb{E}_{X,A}&\log\mathbb{E}_{Y}\frac{P(A|X,Y)}{E_{X,A}P(A|X,Y)}\geq-\mathbb{E}_{X}\log\mathbb{E}_{Y}\frac{\Indnonempty}{\mathbb{E}_{X}\Indnonempty}\geq 0\label{secondTermUnif}
\end{align}
\label{lem:SecondTermUnif}
\end{lemma}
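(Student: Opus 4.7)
The plan is to prove both inequalities by repeated applications of Jensen's inequality to the concave function $\log(\cdot)$ (equivalently, convexity of $-\log$), together with the uniform-action formula from Proposition~\ref{prop:condDxy} that $P(a|x,y)=\I(a\in\Actset(x,y))/N(x,y)$.

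For the first inequality, I would start by applying Jensen's inequality to $-\log$ in the variable $A$, with $X$ held fixed. Since $A$ is independent of $X$ and $-\log$ is convex,
\begin{align*}
-\Expt_{X,A}\log\Expt_{Y}\frac{P(A|X,Y)}{\Expt_{X,A}P(A|X,Y)}
\;\geq\; -\Expt_{X}\log\Expt_{A}\Expt_{Y}\frac{P(A|X,Y)}{\Expt_{X,A}P(A|X,Y)}.
\end{align*}
Next I would swap the order of the $A$- and $Y$-expectations inside, and evaluate $\Expt_{A} P(A|X,Y)$ using Proposition~\ref{prop:condDxy} and the uniformity of $A$: explicitly,
\begin{align*}
\Expt_A P(A|X,Y)=\sum_{a\in\m{A}}\frac{1}{|\m{A}|}\cdot\frac{\I(a\in\Actset(X,Y))}{\NXY}
=\frac{1}{|\m{A}|}\,\Indnonempty.
\end{align*}
Combining this with the identity $\Expt_{X,A}P(A|X,Y)=\tfrac{1}{|\m{A}|}\Expt_X \Indnonempty$ proved in Lemma~\ref{lem:FirstTermUnif} (cf.\ equation~\eqref{ExdCond}), the factors $1/|\m{A}|$ cancel and the right-hand side becomes exactly $-\Expt_X\log\Expt_Y \Indnonempty/\Expt_X\Indnonempty$, as desired.

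For the second inequality, I would again invoke Jensen's inequality, this time for the outer $X$-expectation:
\begin{align*}
-\Expt_X \log \Expt_Y \frac{\Indnonempty}{\Expt_X \Indnonempty}
\;\geq\; -\log \Expt_X \Expt_Y \frac{\Indnonempty}{\Expt_X \Indnonempty}.
\end{align*}
A Fubini swap then yields $\Expt_Y\big(\Expt_X\Indnonempty/\Expt_X\Indnonempty\big)=1$, where the ratio is well-defined for all $y$ in the support of $P_Y$ (since $P(y)>0$ forces some $x$ with $x\adj y$), and hence the right-hand side equals $-\log 1=0$.

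There is no real obstacle here; the only subtlety is bookkeeping: one must carefully verify that the uniformity of $A$ is what makes $\Expt_A P(A|X,Y)$ collapse to $\tfrac{1}{|\m{A}|}\Indnonempty$ (so that the $|\m{A}|$-dependence drops out of the bound), and that the Fubini swap in the second step is justified on the support of $P_Y$. Once these checks are in place, both inequalities follow as two independent applications of Jensen.
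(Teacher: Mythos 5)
Your proposal is correct and follows essentially the same route as the paper's own proof: Jensen's inequality applied to the $A$-expectation (using independence of $A$ and $X$), followed by the computation $\mathbb{E}_A P(A|x,y)=\tfrac{1}{|\mathcal{A}|}\mathds{1}(x\sim y)$ from the uniform-action formula so that the $|\mathcal{A}|$ factors cancel against $\mathbb{E}_{X,A}P(A|X,y)=\tfrac{1}{|\mathcal{A}|}\mathbb{E}_X\mathds{1}(X\sim y)$, and then a second application of Jensen in $X$ for the nonnegativity claim. No gaps; the bookkeeping you flag (the Fubini swap and the cancellation of $1/|\mathcal{A}|$) is exactly what the paper's argument rests on.
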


\begin{proof}
By virtue of Jensen's inequality,
\begin{align}
-\mathbb{E}_{X,A}\log\mathbb{E}_{Y}&\frac{P(A|X,Y)}{E_{X,A}P(A|X,Y)}\geq -\mathbb{E}_{X}\log\mathbb{E}_{Y}\frac{\mathbb{E}_{A}P(A|X,Y)}{E_{X,A}P(A|X,Y)}.\nonumber
\end{align}
Using~\eqref{condDxy} and~\eqref{ExdCond}, we have
\begin{align}
\frac{\mathbb{E}_{A}P(A|x,y)}{E_{X,A}P(A|X,y)}&=\frac{\sum_{a}P(a)\frac{\I\left(a\in\m{A}(x,y)\right)}{\Nxy}}{\tfrac{1}{|\m{A}|}\mathbb{E}_{X}\indnonemptyy}\nonumber\\
&=\frac{\indnonempty}{\mathbb{E}_{X}\indnonemptyy},\nonumber
\end{align}
establishing the first inequality in~\eqref{secondTermUnif}. The second inequality follows by applying Jensen's inequality again, this time w.r.t. $\mathbb{E}_{X}$.
\end{proof}

\vspace{1mm}

Combining Theorem~\ref{thm:genLB}, Lemma~\ref{lem:FirstTermUnif}, and Lemma~\ref{lem:SecondTermUnif}, establishes the following.

\vspace{1mm}

\begin{lemma}
Suppose $P_{Y|X}$ can be represented by a uniform action $A$. Then, for any input distribution $P_X$
\begin{align}
I(X;Y)\geq&-\mathbb{E}_{Y}\log\mathbb{E}_{X}\Indnonempty-\mathbb{E}_{X}\log\mathbb{E}_{Y}\frac{\Indnonempty}{\mathbb{E}_{X}\Indnonempty}.
\label{capaityUnifBound}
\end{align}
\label{thm:uniformLB}
\end{lemma}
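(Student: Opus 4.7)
The plan is simply to chain together the three ingredients already assembled in the excerpt: the general lower bound of Theorem~\ref{thm:genLB}, the uniform-action identity of Lemma~\ref{lem:FirstTermUnif}, and the Jensen-type inequality of Lemma~\ref{lem:SecondTermUnif}. Since the hypothesis is precisely that $P_{Y|X}$ admits a uniform action $A$, all three apply for this $A$ and the claim follows by direct substitution.

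First I instantiate Theorem~\ref{thm:genLB} for the uniform action $A$, obtaining
\begin{align*}
I(X;Y) \geq -H(A) - \mathbb{E}_{Y}\log\mathbb{E}_{X,A}P(A|X,Y) - \mathbb{E}_{X,A}\log\mathbb{E}_Y\frac{P(A|X,Y)}{\mathbb{E}_{X,A}P(A|X,Y)}.
\end{align*}
Next I rewrite the middle term using Lemma~\ref{lem:FirstTermUnif} as $H(A) - \mathbb{E}_Y\log\mathbb{E}_X\Indnonempty$. The essential point is that the $+H(A)$ produced here is exactly matched by the $-H(A)$ inherited from Theorem~\ref{thm:genLB}; this cancellation is what eliminates any residual dependence on the particular action and leaves a bound expressed purely in terms of the marginals and the adjacency relation.

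Finally I apply Lemma~\ref{lem:SecondTermUnif} to the last remaining term, lower bounding it by $-\mathbb{E}_X\log\mathbb{E}_Y \frac{\Indnonempty}{\mathbb{E}_X\Indnonempty}$, and adding the two resulting contributions reproduces the right-hand side of~\eqref{capaityUnifBound}. There is no real obstacle here, since the genuine technical effort was already spent in proving Theorem~\ref{thm:genLB} (via the Donsker--Varadhan variational principle and the convexity relaxation of~\eqref{condDivergence}) and in the two uniform-action lemmas (where Proposition~\ref{prop:condDxy} collapses $P(A|X,Y)$ into the indicator $\mathds{1}(a\in\Actset(X,Y))/N(X,Y)$). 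The only thing to watch is using consistent notation for $\Indnonempty$ across the two substitutions, so that the simplified summands can be combined without further massaging.
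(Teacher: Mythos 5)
Your proposal is correct and is exactly the argument the paper uses: the paper obtains Lemma~\ref{thm:uniformLB} by combining Theorem~\ref{thm:genLB} with Lemma~\ref{lem:FirstTermUnif} (where the $+H(A)$ cancels the $-H(A)$) and Lemma~\ref{lem:SecondTermUnif}. Nothing is missing.
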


\vspace{1mm}

To establish our main result for any channel and input distribution, we first show the following.
\begin{lemma}
Let $P_{Y|X}$ be a channel with the property that $P(y|x)$ is rational for all $x$ and $y$. Then there exists a uniform action for $P_{Y|X}$.
\label{lem:rational}
\end{lemma}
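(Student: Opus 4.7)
The plan is to exploit the rationality of $P(y|x)$ to express all conditional probabilities with a common positive integer denominator $N$, and then exhibit a uniform action supported on $N$ functions chosen so that for every input $x$, exactly $N\cdot P(y|x)$ of them map $x$ to $y$. The uniform distribution over these $N$ actions will then reproduce the target channel. This is essentially the standard ``Hall-type'' combinatorial construction for representing a rational stochastic matrix as an average of deterministic matrices, adapted to the action-based formalism.

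Concretely, I would first let $N$ be the least common multiple of the denominators appearing in $\{P(y|x)\}_{x,y}$, and write $P(y|x) = m_{x,y}/N$ with non-negative integers $m_{x,y}$ satisfying $\sum_{y} m_{x,y} = N$ for every $x\in\m{X}$. Next, for each $x$ independently, I partition the index set $[N] = \{1,\ldots,N\}$ into disjoint blocks $\{B_{x,y}\}_{y\in\m{Y}}$ with $|B_{x,y}| = m_{x,y}$, and I define $N$ candidate actions $a_{1},\ldots,a_{N}$ by setting $a_i(x)$ to be the unique $y$ for which $i\in B_{x,y}$. Placing $P_A$ uniform on $\{a_1,\ldots,a_N\}$ and verifying the induced channel is then immediate:
\begin{align*}
\Pr(A(x)=y) \;=\; \frac{|\{i: a_i(x)=y\}|}{N} \;=\; \frac{m_{x,y}}{N} \;=\; P(y\mid x),
\end{align*}
so the construction represents $P_{Y|X}$, and by assumption each action carries probability $1/N$, matching the definition of a uniform action.

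The main obstacle, and the only nontrivial point, is ensuring that the $N$ constructed functions are genuinely distinct, since the definition of a uniform action requires the probability mass $1/|\m{A}|$ to sit on distinct elements of $\m{A}$. The crucial observation is that the partitions $\{B_{x,y}\}$ can be chosen independently for different $x$, so I have considerable freedom to decouple the actions: a cyclic shift of the partition as $x$ varies, or a generic/random relabeling, will make the tuple $(a_i(x))_{x\in\m{X}}$ injective in $i$ whenever there is enough room for distinct functions. If $N$ were too small to allow this, I would simply replace $N$ by a large multiple $kN$ and subdivide each block $B_{x,y}$ into $k$ sub-blocks of size $m_{x,y}$, which preserves the marginals while multiplying the number of available tuples geometrically; for $k$ large enough the desired distinctness is easily arranged. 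Once distinctness is secured, the proof is complete.
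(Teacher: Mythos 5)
Your core construction is sound and is, in essence, the same argument as the paper's: both proofs reduce the channel to a uniform distribution over a common-denominator-sized collection of deterministic maps. The paper gets there by starting from an arbitrary rational action set (e.g.\ the generic one of Example~\ref{ex:genericaction}) and splitting each action $a_i$ into $M_i$ identical copies of probability $P_A(a_i)/M_i$ until all probabilities coincide; you build the $N$ maps directly by partitioning $[N]$ into blocks $B_{x,y}$ of size $m_{x,y}=N\cdot P(y|x)$ for each $x$. The verification $\Pr(A(x)=y)=m_{x,y}/N=P(y|x)$ and the independence of the uniform index from $X$ are exactly right, and your construction is arguably cleaner (and yields a much smaller action set than the paper's, though the size is immaterial since it cancels out of the final bound).

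The problem is your last paragraph. The distinctness of the $N$ functions is not needed -- and your claimed fix for it is false. In this framework an ``action'' is a labelled object: $A$ is a random variable taking values in an index set $\m{A}$, and two distinct labels are perfectly allowed to induce the same map $\m{X}\to\m{Y}$ (the paper's own proof explicitly manufactures ``$M_i$ identical actions''; everything downstream -- Proposition~\ref{prop:condDxy}, $H(A)=\log|\m{A}|$, $\Nxy=|\Actset(x,y)|$ -- is computed over labels and survives duplication). Insisting on genuinely distinct functions would actually make the lemma unprovable: take $|\m{X}|=1$, $\m{Y}=\{0,1\}$, $P(0|x_0)=2/3$, $P(1|x_0)=1/3$. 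There are only two functions from $\m{X}$ to $\m{Y}$, so a uniform distribution over distinct functions can only realize output probabilities in $\{0,\tfrac12,1\}$, never $2/3$; replacing $N$ by $kN$ just produces $2k$ copies of the constant-$0$ map, and no relabelling across $x$ is available since there is a single $x$. So the claim that ``for $k$ large enough the desired distinctness is easily arranged'' is wrong. The repair is simply to delete that paragraph and note that $\m{A}$ is a set of labelled actions in which repetitions of the underlying function are permitted; with that understanding your first two paragraphs constitute a complete proof.
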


\begin{proof}
For any channel $P_{Y|X}$ with rational probabilities there exists some action set $\m{A}=\{a_1,\cdots,a_{|\m{A}|}\}$ and a corresponding probability distribution $P_A$ consistent with it such that all probabilities $P_A(a_i)$, $i=1,\ldots,|\m{A}|$, are positive rational numbers. For example, the construction from Example~\ref{ex:genericaction} yields rational probabilities $P_A(a_i)$, $i=1,\ldots,|\m{A}|$. We construct a new action $\bar{A}$ by duplicating each action $a_i$ to $M_i$ identical actions, and assigning the probability $P_A(a_i)/M_i$ to each of them. Clearly, the new action is also consistent with $P_{Y|X}$ for any choice of the natural numbers $M_1,\ldots,M_{|\m{A}|}$. By our assumption that all original action probabilities are positive rational numbers, we can always find a choice of $M_1,\ldots,M_{|\m{A}|}$ such that all new action probabilities are equal. For such a choice the action $\bar{{A}}$ will be uniform.
\end{proof}

\vspace{1mm}

Using Lemma~\ref{thm:uniformLB} and Lemma~\ref{lem:rational}, we can now prove our main result.

\begin{proof}[Proof of Theorem~\ref{thm:main}]
Any channel $P_{Y|X}$ can be approximated arbitrarily well by a conditional distribution $\tilde{P}_{\tilde{Y}|X}$ with the same support whose entries are all rational, in the sense that $\max_{x,y}|P_{Y|X}(y|x) - \tilde{P}_{Y|X}(y|x)|$ can be made arbitrarily small. This means that both $P_X\times \tilde{P}_{Y|X}$ and the corresponding marginal $\tilde{P}_Y$  are arbitrarily close to $P_{XY}$ and $P_Y$ respectively. Since the mutual information $I(X;Y)$ is continuous with respect to $P_{XY}$, the mutual information $I(X;\tilde{Y})$ between $X$ and the output of the ``rational'' channel $\tilde{P}_{Y|X}$ can be made arbitrarily close to $I(X;Y)$. By Lemma~\ref{lem:rational}, there exists a uniform action for $\tilde{P}_{\tilde{Y}|X}$, and consequently by Lemma~\ref{thm:uniformLB} its mutual information is lower bounded by~\eqref{capaityUnifBound}. By continuity, $I(X;Y)$ is also lower bounded by~\eqref{capaityUnifBound}.
\end{proof}

\vspace{1mm}

\section{A Convex--Optimization Based Bound}\label{sec:conv}
In the previous section we have proved a lower bound on $I(X;Y)$ that depends only on the marginal distributions $P_X,P_Y$ and the support of the joint distribution, namely, the function $\indnonempty$. Our proof relied on information theoretic arguments. In this section we will take a more direct approach to the problem, and derive bounds on $I(X;Y)$ in terms of the same quantities, using convex optimization. More specifically, to arrive at a lower bound we minimize $I(X;Y)$ w.r.t. $P_{XY}$ subject to the constraints that the marginal distributions are $P_X,P_Y$, and that $P_{XY}(x,y)=0$ whenever $\indnonempty=0$. Throughout this section we assume all logarithms are in the natural basis, while the result of Theorem \ref{thm:convex_opt} remains valid as long as the same logarithmic basis is applied to $I(X;Y)$.

We consider the following problem:
\begin{align*}
&\text{minimize } I(X;Y) \\
&\text{subject to:}\\
&\sum_{x:x\sim y} P_{XY}(x,y)=P_Y(y) \qquad \forall y\in\m{Y}\\
&\sum_{y:y\sim x} P_{XY}(x,y)=P_X(x) \qquad \forall x\in\m{X}\\
&P_{XY}(x,y)\geq 0 \ \text{if } x\sim y, \qquad P_{XY}(x,y)= 0 \ \text{if } x\nsim y.
\end{align*}
Note that the constraints above imply $\sum_{x,y}P_{XY}(x,y)=1$. This is equivalent to
\begin{align*}
&\text{minimize } \sum_{x\sim y} P_{XY}(x,y)\log\frac{P_{XY}(x,y)}{P_X(x)P_Y(y)} \\
&\text{subject to:}\\
&\sum_{x:x\sim y} P_{XY}(x,y)=P_Y(y) \qquad \forall y\in\m{Y}\\
&\sum_{y:y\sim x} P_{XY}(x,y)=P_X(x) \qquad \forall x\in\m{X}\\
&P_{X,Y}(x,y)\geq 0 \qquad \forall (x,y)\in\m{X}\times\m{Y}, x\sim y
\end{align*}
This objective function is convex in $P_{XY}(x,y)$, and the constraints are linear, so the optimization solution can be obtained by the solution to the dual problem given by
\begin{align}
L&=\inf_{P_{XY}(x,y)}\sup_{\lambda_x,\mu_y\in\RR,\tau_{xy}\geq 0}\sum_{x\sim y}P_{XY}(x,y)\log\frac{P_{XY}(x,y)}{P_X(x)P_{Y}(y)}-\sum_{x}\lambda_x\left(\sum_{y: y\sim x}P_{XY}(x,y)-P_{X}(x) \right)\nonumber\\
& \ \ \ \ \ \ \ \ \ \ \ \ \ \ \ \ \ \ \ \ \ \ \ \ \ -\sum_{y}\mu_y\left(\sum_{x: x\sim y}P_{XY}(x,y)-P_{Y}(y) \right)-\sum_{x\sim y}\tau_{xy} P_{XY}(x,y)\nonumber\\
&=\sup_{\lambda_x,\mu_y\in\RR,\tau_{xy}\geq 0}\inf_{P_{XY}(x,y)}\sum_{x\sim y}P_{XY}(x,y)\left(\log\frac{P_{XY}(x,y)}{P_{X}(x)P_{Y}(y)}-\lambda_x-\mu_y-\tau_{xy}\right)+\sum_{x}\lambda_x P_{X}(x)+\sum_{y}\mu_y P_{Y}(y)\label{eq:minimax}\\
&=\sup_{\lambda_x,\mu_y\in\RR,\tau_{xy}\geq 0}\inf_{P_{XY}(x,y)}\sum_{x\sim y}P_{XY}(x,y)\left(\log P_{XY}(x,y)-a_{xy}\right)+\sum_{x}\lambda_x P_{X}(x)+\sum_{y}\mu_y P_{Y}(y)\nonumber
\end{align}
where~\eqref{eq:minimax} follows from the minimax theorem and
\begin{align}
a_{xy}\triangleq\log(P_{X}(x)P_{Y}(y))+\lambda_x+\mu_y+\tau_{xy}.\nonumber
\end{align}
The function $f(x)=x\log x-ax$ is minimized at $x^*=e^{a-1}$ and its minimal value is $f(x^*)=-e^{a-1}$. Using this, we get that
\begin{align}
L&=\sup_{\lambda_x,\mu_y\in\RR,\tau_{xy}\geq 0}\sum_{x\sim y}-e^{\log(P_{X}(x)P_{Y}(y))+\lambda_x+\mu_y+\tau_{xy}-1}+\sum_{x}\lambda_x P_{X}(x)+\sum_{y}\mu_y P_{Y}(y)\nonumber\\
&=\sup_{\lambda_x,\mu_y\in\RR,\tau_{xy}\geq 0}\sum_{x\sim y}-P_{X}(x)P_{Y}(y)e^{\lambda_x+\mu_y+\tau_{xy}-1}+\sum_{x}\lambda_x P_{X}(x)+\sum_{y}\mu_y P_{Y}(y)\nonumber
\end{align}
Clearly, the maximizing $\tau_{xy}$ is $\tau_{xy}=0$ which gives
\begin{align}
L&=\sup_{\lambda_x,\mu_y\in\RR}\sum_{x\sim y}-P_{X}(x)P_{Y}(y)e^{\lambda_x+\mu_y-1}+\sum_{x}\lambda_x P_{X}(x)+\sum_{y}\mu_y P_{Y}(y)\nonumber\\
&=\sup_{\lambda_x,\mu_y\in\RR}\sum_{x\sim y}-P_{X}(x)P_{Y}(y)e^{\lambda_x+\mu_y}+\sum_{x}\lambda_x P_{X}(x)+\sum_{y}\mu_y P_{Y}(y)+1\nonumber
\end{align}
where in the last step we replaced $\mu_y$ with $\mu_y-1$ (with some abuse of notation).
Let $\bm{\lambda}$ and $\bm{\mu}$ be the vectors holding $\{\lambda_x\}_{x\in\m{X}}$ and $\{\mu_y\}_{y\in\m{Y}}$, respectively, and
\begin{align}
G(\bm{\lambda},\bm{\mu})\triangleq\sum_{x\sim y}-P_{X}(x)P_{Y}(y)e^{\lambda_x+\mu_y}+\sum_{x}\lambda_x P_{X}(x)+\sum_{y}\mu_y P_{Y}(y)+1,\nonumber
\end{align}
such that
\begin{align}
L=\sup_{\bm{\lambda}\in\RR^{|\m{X}|},\bm{\mu}\in\RR^{|\m{Y}|}} G(\bm{\lambda},\bm{\mu}).\nonumber
\end{align}

We will use the alternating minimization approach to minimize $-G(\bm{\lambda},\bm{\mu})$ (which is equivalent to maximizing $G(\bm{\lambda},\bm{\mu})$) over $\RR^{|\m{X}|}\times \RR^{|\m{Y}|})$. This approach is described as follows: for arbitrary initialization of $\bm{\lambda}^{(0)}$, we use an iterative algorithm to successively minimize the target function. In $k$-th iteration, we first hold $\bm{\lambda}^{(k-1)}$ fixed and minimize the target function over $\bm{\mu}$ to obtain $\bm{\mu}^{(k-1)}$, and then hold $\bm{\mu}^{(k-1)}$ fixed and minimize the the target function over $\bm{\lambda}$ to obtain $\bm{\lambda}^{(k)}$. In mathematical forms, for $k\ge 0$, we have
\begin{align*}
\bm{\mu}^{(k)} &\in \argmin_{\bm{\mu}} -G(\bm{\lambda}^{(k)},\bm{\mu}),\\
\bm{\lambda}^{(k+1)} &\in \argmin_{\bm{\lambda}} -G(\bm{\lambda},\bm{\mu}^{(k)}).
\end{align*}
The alternating minimization approach is widely used in optimization where separate optimization over different parameter subsets is much easier than the joint optimization, e.g., in the expectation--minimization (EM) algorithm \cite{dempster1977maximum} to find the maximum likelihood estimator, in the Blahut--Arimoto algorithm \cite{blahut1972computation,arimoto1972algorithm} to maximize the mutual information between channel input and output, in minimizing the Kullback--Leibler divergence between two convex sets of finite measures \cite{csisz1984information}, to name a few. One remarkable property of this approach is that, by definition we have
\begin{align}
G(\bm{\lambda}^{(0)},\bm{\mu}^{(0)}) \le G(\bm{\lambda}^{(1)},\bm{\mu}^{(0)}) \le G(\bm{\lambda}^{(1)},\bm{\mu}^{(1)}) \le G(\bm{\lambda}^{(2)},\bm{\mu}^{(1)}) \le \cdots \le L
\end{align}
i.e., the value sequence obtained by this approach is non-decreasing and must have a limit. We remark that since $G(\bm{\lambda},\bm{\mu})$ is jointly concave with respect to $(\bm{\lambda},\bm{\mu})$, the alternating minimization approach converges to the global optima \cite[Proposition 2.7.1]{bertsekas1999nonlinear}, i.e., $$\lim_{k\to\infty}G(\bm{\lambda}^{(k)},\bm{\mu}^{(k)})=\lim_{k\to\infty}G(\bm{\lambda}^{(k+1)},\bm{\mu}^{(k)}) = L.$$

Next, we derive the expression of $\bm{\lambda}^{(k)}$ and $\bm{\mu}^{(k)}$ obtained from the alternating minimization procedure. Initially we set $\lambda^{(0)}_x=0, \ \forall x\in\m{X}$. For $k\ge 0$, by the definition of $\bm{\mu}^{(k)}$ in the alternating minimization we have $\tfrac{\partial G}{\partial \mu_y^{(k)}}=0, \ \forall y\in\m{Y}$, which gives
\begin{align}
e^{-\mu_y^{(k)}}=&\sum_{x:x\sim y}P_{X}(x)e^{\lambda_x^{(k)}}, \qquad \forall y\in\m{Y}\label{eq:xeqs}.
\end{align}
Similarly, for $\bm{\lambda}^{(k+1)}$ we have
\begin{align}
e^{-\lambda_x^{(k+1)}}=&\sum_{y:y\sim x}P_{Y}(y)e^{\mu_y^{(k)}}, \qquad \forall x\in\m{X}\label{eq:yeqs}.
\end{align}

Based on (\ref{eq:yeqs}) and (\ref{eq:xeqs}), it is straightforward to verify that the first two iterations of this procedure yield
\begin{align}
L&\geq G\left(\bm{\lambda}^{(0)},\bm{\mu}^{(0)}\right)=-\mathbb{E}_Y\log\mathbb{E}_X \mathds{1}(X\sim Y)\nonumber\\
L&\geq G\left(\bm{\lambda}^{(1)},\bm{\mu}^{(0)}\right)=-\mathbb{E}_Y\log\mathbb{E}_X \mathds{1}(X\sim Y)-\mathbb{E}_X\log\mathbb{E}_Y\frac{\mathds{1}(X\sim Y)}{\mathbb{E}_X \mathds{1}(X\sim Y)}\nonumber
\end{align}
in agreement with the bound derived in Theorem~\ref{thm:main}.
Continuing with this procedure we can further improve our bound. To characterize the bound after $k$ iterations, we introduce the functions $T_X^{(k)}(P_{X}(x),P_{Y}(y),\indnonempty), T_Y^{(k)}(P_{X}(x),P_{Y}(y),\indnonempty)$ that are defined recursively as
\begin{align}
T_X^{(0)}(P_{X}(x),P_{Y}(y),\indnonempty)&=1,
\end{align}
and for $k\ge 0$,
\begin{align}
T_Y^{(k)}(P_{X}(x),P_{Y}(y),\indnonempty)&=\mathbb{E}_X\left(\frac{\mathds{1}(X\sim Y)}{T_X^{(k)}(P_{X}(x),P_{Y}(y),\indnonempty)}\right),\\
T_X^{(k+1)}(P_{X}(x),P_{Y}(y),\indnonempty)&=\mathbb{E}_Y\left(\frac{\mathds{1}(X\sim Y)}{T_Y^{(k)}(P_{X}(x),P_{Y}(y),\indnonempty)}\right).
\end{align}
It can be easily verified by induction that
\begin{align}
G\left(\bm{\lambda}^{(k)},\bm{\mu}^{(k)}\right)=&-\mathbb{E}_X\log T_X^{(k)}(P_{X}(x),P_{Y}(y),\indnonempty)
-\mathbb{E}_Y\log T_Y^{(k)}(P_{X}(x),P_{Y}(y),\indnonempty)\nonumber\\
G\left(\bm{\lambda}^{(k+1)},\bm{\mu}^{(k)}\right)=&-\mathbb{E}_X\log T_X^{(k+1)}(P_{X}(x),P_{Y}(y),\indnonempty)\nonumber-\mathbb{E}_Y\log T_Y^{(k)}(P_{X}(x),P_{Y}(y),\indnonempty)\nonumber.
\end{align}
Thus, we have arrived at the following theorem.
\begin{theorem}\label{thm:convex_opt}
For any jointly distributed discrete r.vs $(X,Y)$ and any $k\ge 0$,
\begin{align}
I(X;Y)\geq &-\mathbb{E}_X\log T_X^{(k+1)}(P_{X}(x),P_{Y}(y),\indnonempty)
-\mathbb{E}_Y\log T_Y^{(k)}(P_{X}(x),P_{Y}(y),\indnonempty)\nonumber\\
\geq &-\mathbb{E}_X\log T_X^{(k)}(P_{X}(x),P_{Y}(y),\indnonempty)\nonumber-\mathbb{E}_Y\log T_Y^{(k)}(P_{X}(x),P_{Y}(y),\indnonempty)\nonumber.
\end{align}
\end{theorem}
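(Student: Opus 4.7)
The plan is to combine weak duality for the convex program set up above with an induction along the alternating minimization trajectory. Concretely, since the true joint distribution $P_{XY}$ is feasible for the primal and attains value $I(X;Y)$, weak duality yields $G(\bm{\lambda},\bm{\mu}) \le L \le I(X;Y)$ for every $(\bm{\lambda},\bm{\mu})$. Thus it suffices to (i) identify the iterates $\bm{\lambda}^{(k)}, \bm{\mu}^{(k)}$ in closed form via $T_X^{(k)}, T_Y^{(k)}$, (ii) evaluate $G$ at $(\bm{\lambda}^{(k)},\bm{\mu}^{(k)})$ and $(\bm{\lambda}^{(k+1)},\bm{\mu}^{(k)})$ and show these equal the claimed expressions, and (iii) appeal to the monotonicity of the alternating minimization to obtain the middle inequality.

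I would carry out step (i) by induction on $k$. The base case $\lambda_x^{(0)}=0=-\log T_X^{(0)}$ is immediate. For the inductive step, substituting $e^{-\lambda_x^{(k)}}=T_X^{(k)}$ (viewed as a function of $x$) into the update (\ref{eq:xeqs}) gives
\begin{align*}
e^{-\mu_y^{(k)}} \;=\; \sum_{x:\,x\sim y} \frac{P_X(x)}{T_X^{(k)}} \;=\; \mathbb{E}_X\!\left(\frac{\mathds{1}(X\sim y)}{T_X^{(k)}}\right) \;=\; T_Y^{(k)},
\end{align*}
by the recursive definition, and analogously (\ref{eq:yeqs}) yields $e^{-\lambda_x^{(k+1)}} = T_X^{(k+1)}$. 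For step (ii), the critical observation is that the first-order conditions used in the updates cause the exponential sum in $G$ to collapse against the additive constant $+1$: the stationarity $\partial G/\partial \mu_y=0$ at $(\bm{\lambda}^{(k)},\bm{\mu}^{(k)})$ reads $\sum_{x:x\sim y} P_X(x)\, e^{\lambda_x^{(k)}+\mu_y^{(k)}} = 1$, and multiplying by $P_Y(y)$ and summing over $y$ gives $\sum_{x\sim y} P_X(x)P_Y(y)e^{\lambda_x^{(k)}+\mu_y^{(k)}}=1$. Consequently
\begin{align*}
G(\bm{\lambda}^{(k)},\bm{\mu}^{(k)}) \;=\; \sum_x \lambda_x^{(k)} P_X(x) + \sum_y \mu_y^{(k)} P_Y(y) \;=\; -\mathbb{E}_X\log T_X^{(k)} - \mathbb{E}_Y \log T_Y^{(k)},
\end{align*}
using step (i). The evaluation at $(\bm{\lambda}^{(k+1)},\bm{\mu}^{(k)})$ proceeds identically, invoking instead the stationarity $\partial G/\partial \lambda_x = 0$ that defines $\bm{\lambda}^{(k+1)}$ via (\ref{eq:yeqs}), so that the same cancellation of the exponential term occurs and one reads off $-\mathbb{E}_X\log T_X^{(k+1)} - \mathbb{E}_Y \log T_Y^{(k)}$.

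Step (iii) is then immediate: by the defining property of the alternating minimization, $G(\bm{\lambda}^{(k)},\bm{\mu}^{(k)}) \le G(\bm{\lambda}^{(k+1)},\bm{\mu}^{(k)})$, which yields the second (rightmost) inequality of the theorem, and combining with the weak-duality bound from the first paragraph gives the outer inequality $I(X;Y) \ge G(\bm{\lambda}^{(k+1)},\bm{\mu}^{(k)})$. The main obstacle is really just bookkeeping: when evaluating $G$ at the mixed-index pair $(\bm{\lambda}^{(k+1)},\bm{\mu}^{(k)})$, only one of the two stationarity conditions applies, so one must invoke the correct update equation to trigger the cancellation. Everything else is a routine combination of convex-duality folklore with the recursive structure of $T_X^{(k)}$ and $T_Y^{(k)}$.
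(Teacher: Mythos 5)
Your proposal is correct and follows essentially the same route as the paper: weak duality for the marginal-constrained convex program, closed-form identification of the alternating-minimization iterates $e^{-\lambda_x^{(k)}}=T_X^{(k)}$, $e^{-\mu_y^{(k)}}=T_Y^{(k)}$ via the stationarity updates, and monotonicity of the iteration for the middle inequality. The only difference is that you spell out the cancellation of the exponential term against the constant $+1$, which the paper leaves as an "easily verified by induction" step.
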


\section{Examples}\label{sec:examples}

In this section we evaluate the bounds derived in Theorems~\ref{thm:main} and \ref{thm:upper}, and when possible also those from Theorem~\ref{thm:convex_opt}, for four examples. The following simple lower bound on $I(X;Y)$ will serve as our baseline for demonstrating the improvement attained by applying the bound from Theorem~\ref{thm:main}.

\begin{proposition}
  For any jointly distributed discrete r.vs $(X,Y)$,
  \begin{align}\label{eq:basebound}
I(X,Y) \geq -\Expt_Y\log\Expt_X\I(X\adj Y).
  \end{align}
\label{prop:baseline}
\end{proposition}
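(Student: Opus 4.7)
The plan is to derive Proposition~\ref{prop:baseline} as an immediate corollary of Theorem~\ref{thm:main}. Recall that the latter gives
\begin{equation*}
I(X,Y) \geq -\Expt_Y\log\Expt_X\I(X\adj Y) \;-\; \Expt_X\log\Expt_Y\frac{\I(X\adj Y)}{\Expt_X\I(X\adj Y)},
\end{equation*}
so it will suffice to show that the second summand on the right-hand side is non-negative; dropping it then yields~\eqref{eq:basebound} directly.

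To prove non-negativity, I would apply Jensen's inequality to the outer expectation, exploiting the convexity of $-\log$ to pull $\Expt_X$ inside the logarithm:
\begin{equation*}
-\Expt_X\log\Expt_Y\frac{\I(X\adj Y)}{\Expt_X\I(X\adj Y)} \;\geq\; -\log \Expt_X\Expt_Y\frac{\I(X\adj Y)}{\Expt_X\I(X\adj Y)}.
\end{equation*}
I would then swap the two expectations on the right (all summands being non-negative, this is just a rearrangement of a finite sum), use the fact that for each fixed $y$ the numerator summed against $P_X$ equals the denominator, and cancel to obtain $\Expt_Y[1] = 1$. The only caveat is division by zero when $\Expt_X\I(X\adj y) = 0$, but this forces $P_Y(y) = 0$, so such values of $y$ contribute nothing. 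Hence the quantity inside $-\log$ equals $1$ and the whole bound equals $0$, as required.

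The actual work is essentially a one-line calculation and I do not anticipate any real obstacle. As a self-contained alternative that bypasses Theorem~\ref{thm:main}, one could run the random-coding argument already sketched in Section~\ref{sec:intro}: generate an i.i.d.\ length-$n$ codebook from $P_X^{\otimes n}$, decode by seeking the unique codeword coordinate-wise adjacent to a $P_Y$-typical output $y^n$, and carry out a standard union-bound analysis to show that every rate strictly below $-\Expt_Y\log\Expt_X\I(X\adj Y)$ is achievable on the memoryless extension. Since any achievable rate is at most $I(X;Y)$, this operational route reaches the same conclusion while making the coding-theoretic meaning of the bound transparent.
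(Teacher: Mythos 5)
Your proposal matches the paper's own derivation: the paper likewise obtains Proposition~\ref{prop:baseline} from Theorem~\ref{thm:main} by applying Jensen's inequality to the second term, $-\mathbb{E}_{X}\log\mathbb{E}_{Y}\frac{\Indnonempty}{\mathbb{E}_{X}\Indnonempty}\geq -\log\mathbb{E}_{Y}\frac{\mathbb{E}_{X}\Indnonempty}{\mathbb{E}_{X}\Indnonempty}=0$, and it also notes the same operational random-coding alternative from the introduction. The argument is correct, including your handling of the degenerate case $\Expt_X\I(X\adj y)=0$.
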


Similar to the bound from Theorem~\ref{thm:main}, the bound above is given in terms of the marginals and the joint support of $(X,Y)$. However it is weaker than the former bound as it can be obtained from it directly by applying Jensen's inequality on the second term of~\eqref{eq:ourbound}, which gives  $-\mathbb{E}_{X}\log\mathbb{E}_{Y}\frac{\Indnonempty}{\mathbb{E}_{X}\Indnonempty}\geq -\log\mathbb{E}_{Y}\frac{\mathbb{E}_{X}\Indnonempty}{\mathbb{E}_{X}\Indnonempty}=0$. In Section~\ref{sec:intro} we also gave an operational proof of this bound.

\subsection{Erasure Channel}
The binary erasure channel has input $X\in\{0,1\}$ and output $Y\in\{0,1,\m{E}\}$ such that $\Pr(Y=x|X=x)=1-\epsilon$ and $\Pr(Y=\m{E}|X=x)=\epsilon$ for any $x$. For $X\sim\mathrm{Bern}(p)$ we have $\Pr(Y=0)=(1-\epsilon)(1-p)$, $\Pr(Y=1)=(1-\epsilon)p$ and $\Pr(Y=\m{E})=\epsilon$ and the mutual information between the input and output is $I_p(X;Y)=(1-\epsilon)h(p)$. For this channel $x\sim y$ if and only if either $x=y$ or $y=\m{E}$, and therefore
\begin{align}
-\Expt_Y\log\Expt_X\I(X\adj Y)&=-(1-\epsilon)(1-p)\log (1-p)-(1-\epsilon)p\log(p)-\epsilon\log(1)\nonumber\\
&=(1-\epsilon)h(p).\nonumber
\end{align}
Thus, for this channel our lower bound from Theorem~\ref{thm:main} as well as the weaker bound from Proposition~\ref{prop:baseline} are tight.

In order to evaluate our upper bound from Theorem~\ref{thm:upper} we need to choose an action $A$ consistent with $P_{Y|X}$. We take the natural action set, that consists of two actions, $a_1(x)=x$ and $a_2(x)=\m{E}$ with $p(a_1)=1-\epsilon$ and $p(a_2)=\epsilon$. For this choice we have
\begin{align}
\Expt_A\log \Expt_{XY}I(A\sim(X,Y))&=p(a_1)\log\Pr(X=Y)+p(a_2)\log\Pr(Y=\m{E})\nonumber\\
&=(1-\epsilon)\log(1-\epsilon)+\epsilon\log(\epsilon)\nonumber\\
&=-h(\epsilon).\nonumber
\end{align}
Since $H(Y)=h(\epsilon)+(1-\epsilon)h(p)$, the upper bound from Theorem~\ref{thm:upper} is tight and gives
\begin{align}
I_p(X;Y)\leq (1-\epsilon)h(p).\nonumber
\end{align}

\subsection{Z Channel}

The (symmetric) Z channel has a binary input $X$ and a binary output $Y$ such that $\Pr(Y=0|X=0)=1$ and $\Pr(Y=0|X=1)=\Pr(Y=1|X=1)=\tfrac{1}{2}$. For $X\sim\mathrm{Bern}(p)$ we have $Y\sim\mathrm{Bern}(\tfrac{p}{2})$, and the mutual information between the input and output is $I_p(X;Y)=h(\tfrac{p}{2})-p$. For this channel $x\sim y$ if and only if $(x,y)\neq (0,1)$ and therefore $\Expt_X\I(X\adj 0)=1$ and $\Expt_X\I(X\adj 1)=\Pr(X=1)=p$. We have
\begin{align}
-\Expt_Y\log\Expt_X\I(X\adj Y)&=-\Pr(Y=0)\log\Expt_X\I(X\adj 0)-\Pr(Y=1)\log\Expt_X\I(X\adj 1) \nonumber\\
&=-\frac{p}{2}\log(p),\label{Zfirstterm}
\end{align}
and
\begin{align}
-\mathbb{E}_{X}\log\mathbb{E}_{Y}\frac{\Indnonempty}{\mathbb{E}_{X}\Indnonempty}&=-(1-p)\log\left(\left(1-\frac{p}{2}\right)\frac{\I(0\sim 0)}{1}+\frac{p}{2}\frac{\I(0\sim 1)}{p} \right)
-p\log\left(\left(1-\frac{p}{2}\right)\frac{\I(1\sim 0)}{1}+\frac{p}{2}\frac{\I(1\sim 1)}{p} \right)\nonumber\\
&=-(1-p)\log\left(1-\frac{p}{2}\right)
-p\log\left(\left(1-\frac{p}{2}\right)+\frac{1}{2}\right)\nonumber\\
&=1-(1-p)\log(2-p)-p\log(3-p).
\label{ZsecondTerm}
\end{align}
Thus, Proposition~\ref{prop:baseline} gives
\begin{align}
I_p(X;Y)\geq -\frac{p}{2}\log(p),\label{Zsimple}
\end{align}
and Theorem~\ref{thm:main} gives
\begin{align}
I_p(X;Y)\geq 1-\frac{p}{2}\log(p)-(1-p)\log(2-p)-p\log(3-p).\label{Zub}
\end{align}

For comparison, we also take a look at a further refinement given by Theorem \ref{thm:convex_opt}. By the definitions of $T_X^{(k)}$ and $T_Y^{(k)}$, we know that
\begin{align*}
T_Y^{(0)}(P_{X}(x),P_{Y}(y),\indnonempty) &= \mathbb{E}_X \I(X\sim Y) = \I(Y=0) + p\I(Y=1),\\
T_X^{(1)}(P_{X}(x),P_{Y}(y),\indnonempty) &= \mathbb{E}_Y\left(\frac{\mathds{1}(X\sim Y)}{T_Y^{(0)}(P_{X}(x),P_{Y}(y),\indnonempty)}\right) = \mathbb{E}_Y\left(\frac{\mathds{1}(X\sim Y)}{\I(Y=0) + p\I(Y=1)}\right)\\
&= \left(1-\frac{p}{2}+\frac{p}{2}\cdot 0\right)\I(X=0) + \left(1-\frac{p}{2}+\frac{p}{2}\cdot \frac{1}{p}\right)\I(X=1)\\
&= \left(1-\frac{p}{2}\right)\I(X=0) + \left(\frac{3}{2}-\frac{p}{2}\right)\I(X=1),\\
T_Y^{(1)}(P_{X}(x),P_{Y}(y),\indnonempty) &= \mathbb{E}_X\left(\frac{\mathds{1}(X\sim Y)}{T_X^{(1)}(P_{X}(x),P_{Y}(y),\indnonempty)}\right) =\mathbb{E}_X\left(\frac{\mathds{1}(X\sim Y)}{\left(1-\frac{p}{2}\right)\I(X=0) + \left(\frac{3}{2}-\frac{p}{2}\right)\I(X=1)}\right)\\
&= \left((1-p)\cdot \frac{1}{1-\frac{p}{2}} + p\cdot \frac{1}{\frac{3}{2}-\frac{p}{2}}\right)\I(Y=0) + \left((1-p)\cdot 0 + p\cdot \frac{1}{\frac{3}{2}-\frac{p}{2}} \right)\I(Y=1)\\
&= \left(\frac{2-2p}{2-p} + \frac{2p}{3-p}\right)\I(Y=0) + \frac{2p}{3-p}\I(Y=1).
\end{align*}
As a result, Theorem \ref{thm:convex_opt} gives
\begin{align}\label{Zconv}
I_p(X;Y) &\geq -\mathbb{E}_X\log T_X^{(1)}(P_{X}(x),P_{Y}(y),\indnonempty)\nonumber-\mathbb{E}_Y\log T_Y^{(1)}(P_{X}(x),P_{Y}(y),\indnonempty)\nonumber\\
&= -(1-p)\log\left(1-\frac{p}{2}\right) - p\log\left(\frac{3-p}{2}\right) - \left(1-\frac{p}{2}\right)\log\left(\frac{2-2p}{2-p} + \frac{2p}{3-p}\right) - \frac{p}{2}\log\left(\frac{2p}{3-p}\right)\nonumber\\
&= \frac{p}{2}\log(2-p)+(1-p)\log(3-p)-\frac{p}{2}\log(p)-\left(1-\frac{p}{2}\right)\log(3-2p).
\end{align}

The bounds from~\eqref{Zsimple},~\eqref{Zub} and~\eqref{Zconv} are plotted in Figure~\ref{fig:Z} as a function of $p$ along with the exact value of $I_p(X;Y)$. It can be seen that the lower bound from Theorem~\ref{thm:main} is significantly tighter than the one form Proposition~\ref{prop:baseline}, and it is quite close to $I_p(X;Y)$ for all values of $p$. The lower bound from Theorem \ref{thm:convex_opt} is even tighter.

\begin{figure}
\begin{center}
\includegraphics[width=0.8 \columnwidth]{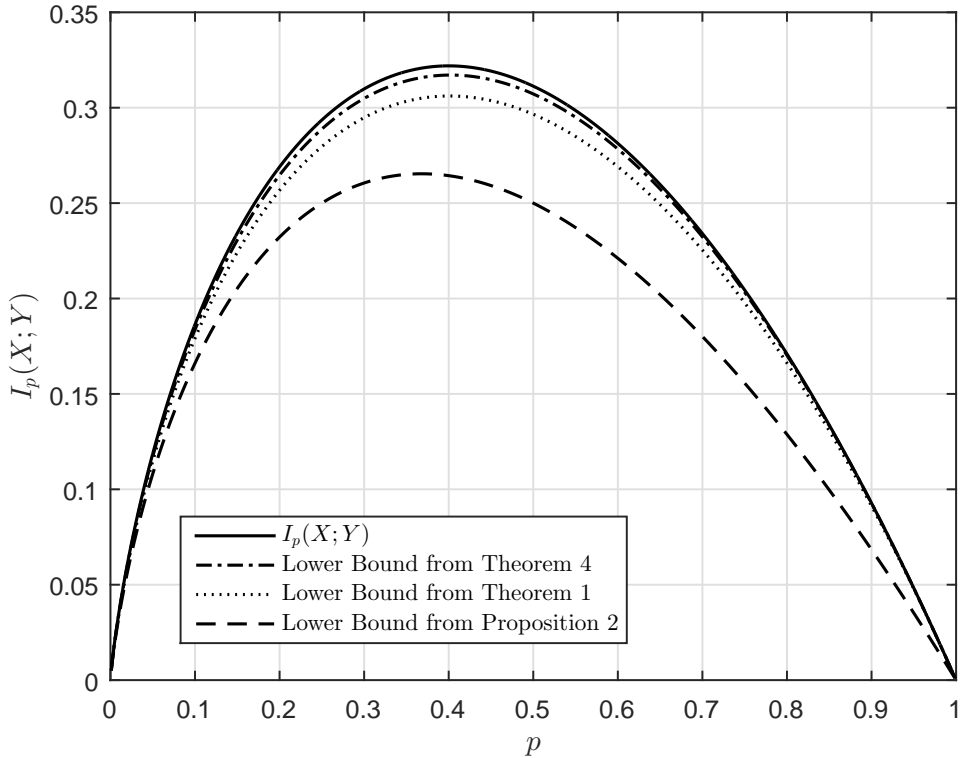}
\end{center}
\caption{$I_p(X;Y)$ for the Z channel together with the three lower bounds from~\eqref{Zsimple},~\eqref{Zub} and~\eqref{Zconv} as a function of $p$.}
\label{fig:Z}
\end{figure}

In order to evaluate the upper bound from Theorem~\ref{thm:upper} we use the natural action $a_1(x)=x$ and $a_2(x)=0$ with $p(a_1)=p(a_2)=\tfrac{1}{2}$. For this choice $a_1\sim(x,y)$ if and only if $x=y$ and $a_2\sim(x,y)$ if and only if $y=0$, and therefore $\Expt_{XY}\I(a_1\sim(X,Y))=\Pr(X=Y)=1-\tfrac{p}{2}$ and $\Expt_{XY}\I(a_2\sim(X,Y))=\Pr(Y=0)=1-\tfrac{p}{2}$. We have
\begin{align}
\Expt_A\log \Expt_{XY}\I(A\sim(X,Y))&=\log\left(1-\frac{p}{2}\right),\label{ZubPart1}
\end{align}
and
\begin{align}
&\Expt_{X,Y}\log\Expt_A\frac{\I(A\sim (X,Y))}{\Expt_{X,Y}\I(A\sim (X,Y))}=\Expt_{X,Y}\log\Expt_A\I(A\sim (X,Y))-\log\left(1-\frac{p}{2}\right)\nonumber\\
&=\Pr(X=0,Y=0)\log(1)+\Pr(X=1,Y=0)\log\left(\frac{1}{2}\right)+\Pr(X=1,Y=1)\log\left(\frac{1}{2}\right)-\log\left(1-\frac{p}{2}\right)\nonumber\\
&=-p-\log\left(1-\frac{p}{2}\right).
\label{ZubPart2}
\end{align}
Recalling that $H(Y)=h(\tfrac{p}{2})$ and applying theorem~\ref{thm:upper} we get
\begin{align}
I_p(X;Y)&\leq h\left(\frac{p}{2}\right)+\log\left(1-\frac{p}{2}\right)-p-\log\left(1-\frac{p}{2}\right)\nonumber\\
&=h\left(\frac{p}{2}\right)-p\nonumber,
\end{align}
which is tight for any $p$.

\subsection{Binary Deletion Channel}
The binary i.i.d. deletion channel operates by independently deleting input bits with probability $d$. In this subsection, we apply Theorem~\ref{thm:main}  and Theorem~\ref{thm:upper} to obtain lower and upper bounds on the mutual information for an i.i.d. uniform input process. Both bounds outperform the best known bounds in some regimes of deletion probabilities. In general, tighter lower bounds can be obtained by applying Theorem~\ref{thm:convex_opt} with higher values of $k$. However, as will be demonstrated below, even the task of computing the bound from Theorem~\ref{thm:main} (corresponding to Theorem~\ref{thm:convex_opt} with $k=0$) is quite challenging.

\vspace{3pt}

\noindent\textit{\underline{Lower Bound for an  i.i.d Uniform Input}}

We apply Theorem~\ref{thm:main} to obtain a lower bound for $I(\bX;\bY)$ under a uniform i.i.d. input distribution $\bX\sim\Unif \left(\{0,1\}^n\right)$, which outperforms the best known bounds for i.i.d inputs \cite{gallager1961,rd13}. Since the deletion channel is information stable, any rate smaller than the associated $\lim_{n\rightarrow\infty}I(\bX;\bY)/n$ is achievable with uniform i.i.d. codebooks. Note that for a uniform i.i.d. input, the output $\bY$ is also uniform i.i.d. given its length $\Theta n$, where the latter is binomial with parameters $(n,1-d)$.

For the i.i.d. deletion channel %the quantity $\Nxy$ corresponds to the number of appearances of $y$ as a subsequence of $x$, and
$\mathds{1}(\bx\sim\by)$ indicates whether or not $\by$ is a subsequence of $\bx$. For $0\leq t\leq 1$, define the operation $\langle t\rangle\triangleq\max(t,1/2)$.
According to~\cite[Lemma 3.1]{dg01}, for any $\by$ of length $\theta n$ we have
\begin{align}
\sum_{\bx\in\{0,1\}^n}\mathds{1}(\bx\sim\by)=\sum_{j=\theta n}^n {n\choose j}\doteq 2^{n h(\langle \theta\rangle)},\label{diggavicount}
\end{align}
where $h(\cdot)$ is the binary entropy function, and $\doteq$ denotes exponential equality in the usual sense. This implies that for any $\by$ of length $\theta n$ we have $\mathbb{E}_{\bX}\mathds{1}(\bX\sim\by)\doteq 2^{n (h(\langle \theta\rangle)-1)}$. The function $h(\langle \theta\rangle)$ is concave in $\theta$ and therefore
\begin{align}
-\lim_{n\to\infty}\frac{1}{n}\mathbb{E}_{Y}\log\mathbb{E}_{X}\mathds{1}(\bX\sim\bY)&= -\mathbb{E}_{\Theta}\left(h(\langle \Theta\rangle)-1\right)\nonumber\\
&\geq 1-h(\langle \mathbb{E}\Theta \rangle)\nonumber\\
&= 1-h(\langle 1-d \rangle).\label{deletion1stTerm}
\end{align}
where $\Theta$ is the normalized (random) length of $Y$.

%We can now define the typical set $\Ty$ as the set of all $y$'s whose normalized length $\Theta$ satisfies $|\Theta - (1-d)| < \epsilon$. For any $\epsilon>0$ and $n$ large enough $P(Y\in\Ty)$ is arbitrarily close to one, and we can therefore apply~\eqref{condTypeBound}, yielding
%\begin{align}
%-\mathbb{E}_{Y}&\log\mathbb{E}_{X}\Indnonempty\geq n\left(1- h(\langle 1-d-\epsilon\rangle)\right).\label{deletion1stTerm}
%\end{align}
The right hand side of~\eqref{deletion1stTerm} is a well known lower bound for the deletion channel capacity, obtained with a uniform i.i.d. input~\cite{dg01}. We now evaluate the second term in~\eqref{eq:ourbound} in order to improve upon this bound. To this end, we first parse each $x\in\{0,1\}^n$ into phrases that contain exactly two bit flips and end immediately after the second flip. For example, the string $0001111011001110001$ is parsed into the three phrases $00011110,11001,110001$. We identify each phrase with three parameters: $b\in\{0,1\}$ is the first bit in the phrase, $k_1\geq 2$ is the index of the first flip in the phrase, and $k_2\geq 1$ is such that $k_1+k_2$ is the total number of bits in the phrase. In our example, the three phrases correspond to $\{b=0,k_1=4,k_2=4\}$, $\{b=1,k_1=3,k_2=2\}$ and $\{b=1,k_1=3,k_2=3\}$, respectively. For any pair of integers $2\leq k_1<n$, $1\leq k_2<n$ let $\Psi^{k_1,k_2}(\bx)$ be the number of $\{k_1,k_2\}$-phrases in the parsing of $\bx$. For $\epsilon>0$ we define the typical set
\begin{align}
\Tx_{\epsilon}\triangleq\Big\{\bx\in\{0,1\}^n \ : & \ \left|\frac{1}{n}\Psi^{k_1,k_2}(\bx)-\frac{1}{5}\cdot 2^{-(k_1+k_2-1)}\right|<\epsilon \ \ \ \ \ \ \forall \ 2\leq k_1<n, \  1\leq k_2<n  \Big\}\nonumber.
\end{align}
It holds that for any $\epsilon>0$ and $n$ large enough $\Pr(\bX\in\Tx_{\epsilon})$ is indeed arbitrary close to $1$. To see this, define the three i.i.d. mutually independent processes
\begin{align}
B_i&\sim\mathrm{Bern}(\tfrac{1}{2}), \ \text{i.i.d.}\nonumber\\
K_{1i}&\sim 1+\mathrm{Geometric}(\tfrac{1}{2}), \ \text{i.i.d.}\nonumber\\
K_{2i}&\sim \mathrm{Geometric}(\tfrac{1}{2}), \ \text{i.i.d.}\nonumber
\end{align}
and note that an i.i.d. $\mathrm{Bern}(\tfrac{1}{2})$ random process is equivalent to the process obtained by stacking the random phrases $\{B_i,K_{1i},K_{2i}\}$ one after the other. Moreover, the probability of such a random phrase being of type $\{k_1,k_2\}$ is $2^{-(k_1+k_2-1)}$ and the expected length is $\mathbb{E}(K_{1i}+K_{2i})=5$. In our setting, $\bX$ is an $n$-dimensional i.i.d. $\mathrm{Bern}(\tfrac{1}{2})$ random vector. Thus, $\bX$ can be generated by stacking exactly $n/5$ random phrases $\{B_i,K_{1i},K_{2i}\}$ one after the other and either removing the last bits if the length of the obtained vector is greater than $n$, or appending i.i.d. $\mathrm{Bern}(\tfrac{1}{2})$ bits to the vector if its length is smaller than $n$. Since the expected length of a phrase is $5$ bits, for any $\delta>0$ the number of removed/appended bits is w.h.p. smaller than $\delta n$. Therefore, the contribution of these bits to the distribution of the phrase lengths in the parsing of $\bX$ is negligible, and we get that $\Pr(\bX\in\Tx_{\epsilon})\to 1$ with $n$, by the law of large numbers.

For $n$ large enough we can write
\begin{align}
-\mathbb{E}_{\bX}\log\mathbb{E}_{\bY}\frac{\mathds{1}(\bX\sim\bY)}{\mathbb{E}_{\bX}\mathds{1}(\bX\sim\bY)}&=
-\Pr(\bX\in\Tx_{\epsilon})\mathbb{E}_{\bX|\Tx_\epsilon}\log\mathbb{E}_{\bY}\frac{\mathds{1}(\bX\sim\bY)}{\mathbb{E}_{\bX}\mathds{1}(\bX\sim\bY)}
-\Pr(\bX\notin\Tx_{\epsilon})\mathbb{E}_{\bX|\overline{\Tx}_\epsilon}\log\mathbb{E}_{\bY}\frac{\mathds{1}(\bX\sim\bY)}{\mathbb{E}_{\bX}\mathds{1}(\bX\sim\bY)}\nonumber\\
&\geq -\Pr(\bX\in\Tx_{\epsilon})\log\mathbb{E}_{\bY}\frac{\mathbb{E}_{\bX|\Tx_\epsilon}\mathds{1}(\bX\sim\bY)}{\mathbb{E}_{\bX}\mathds{1}(\bX\sim\bY)}
-\Pr(\bX\notin\Tx_{\epsilon})\log\mathbb{E}_{\bY}\frac{\mathbb{E}_{\bX|\overline{\Tx}_\epsilon}\mathds{1}(\bX\sim\bY)}{\mathbb{E}_{\bX}\mathds{1}(\bX\sim\bY)}\nonumber\\
&\geq -(1-\epsilon)\log\mathbb{E}_{\bY}\frac{\mathbb{E}_{\bX|\Tx_\epsilon}\mathds{1}(\bX\sim\bY)}{\mathbb{E}_{\bX}\mathds{1}(\bX\sim\bY)}-\epsilon n,\label{condS}
\end{align}
where the first inequality follows from Jensen's inequality and in the second we have used the fact that $\mathbb{E}_{X}\mathds{1}(\bX\sim\by)\geq 2^{-n}$ for any $\by$, and therefore $\mathds{1}(\bX\sim\bY)/\mathbb{E}_{X}\mathds{1}(\bX\sim\by)\leq 2^n$ for any $\by$, along with  $\Pr(\bX\in\Tx_{\epsilon})>1-\epsilon$.
%Thus, applying the same reasoning as in~\eqref{condTypeBound} we obtain
%\begin{align}
%-\mathbb{E}_{X}\log\mathbb{E}_{Y}\frac{\Indnonempty}{\mathbb{E}_{X}\Indnonempty}\geq-\log\mathbb{E}_{Y}\frac{\mathbb{E}_{X|\Tx}\Indnonempty}{\mathbb{E}_{X}\Indnonempty}\nonumber
%\end{align}
%for $n$ large enough.
Recalling that $\Theta$ is the normalized (random) length of $\bY$, we take the expectation $\mathbb{E}_{\bY}$ as $\mathbb{E}_{\Theta}\mathbb{E}_{\bY|\Theta}$ and use~\eqref{diggavicount} to obtain
\begin{align}
\mathbb{E}_{\bY}\frac{\mathbb{E}_{\bX|\Tx_{\epsilon}}\mathds{1}(\bX\sim\bY)}{\mathbb{E}_{\bX}\mathds{1}(\bX\sim\bY)}&\doteq\mathbb{E}_{\Theta}2^{n (1-h(\langle \Theta \rangle))}\mathbb{E}_{\bY|\Theta}\mathbb{E}_{\bX|\Tx_\epsilon}\mathds{1}(\bX\sim\bY)\nonumber\\
&=\mathbb{E}_{\Theta}2^{n (1-h(\langle \Theta \rangle))}\Pr\left(\bX\sim \bY|\Theta,\bX\in\Tx_\epsilon \right).\label{expectedPerror}
\end{align}

Now, consider a greedy algorithm for determining whether $\by$ is a subsequence of $\bx$, defined as follows~\cite[Section 3.1]{mitzenmacher09}: Scanning from left to right, take the first bit in $\by$ and match it with its first appearance in $\bx$. Then take the second bit in $\by$ and match it with its subsequent first appearance in $\bx$. Continue until either $\bx$ or $\by$ are exhausted, where the latter case is termed success. It is easy to see that the greedy algorithm succeeds if and only if $\bx\sim \by$. %Computing~\eqref{expectedPerror} reduces to computing the probability $P(\NXY>0|\Theta,X\in\Tx)$, which is identical to the probability that the greedy algorithm succeeds.
For statistically independent random vectors $\bX$ and $\bY$, we enumerate the phrases in the parsing of $\bX$ by $i=1,\ldots,M(\bX)$ where $M(\bX)$ is the (random) number of phrases in $\bX$. The vector $\bY$ consists of $\Theta n$ i.i.d. uniform bits. To simplify computations, we construct a vector $\bY'$ of length $n$ by taking $\bY$ and possibly padding it with i.i.d. bits. We define the random variables $Z_i$ as the number of bits in $\bY'$ that are matched to bits in the $i$th phrase of $\bX$ by the greedy algorithm. Under this construction, the events $\left\{\sum_i Z_i\geq \Theta n\right\}$ coincides with the event $\{\bX\sim \bY\}$, since the additional random suffix does not affect the event where the first $\Theta n$ bits in $\bY'$ are matched. Under this assumption the $Z_i$'s are clearly mutually independent, given that the phrase types $\{k_{1i},k_{2i}\}_{i=1}^{M(\bX)}$ of $\bX$ are known (but assuming that their first bit identifiers $\{b_i\}_{i=1}^{M(\bX)}$ remain random). Of course, the distribution of $Z_i$ depends on the parameters $k_{1i},k_{2i}$ that correspond to the $i$th phrase in $\bX$. In the appendix, we show that given $K_{1i}$ and $K_{2i}$, the (base two) moment generating function of $Z_i$ is
\begin{align}
\lambda^{k_1,k_2}_{Z_{i}}(t)&\triangleq \mathbb{E}\left(2^{t Z_i}|{K_{1i}=k_1,K_{2i}=k_2}\right)\nonumber\\
&=2^{k_1(t-1)}+2^{t-1}\frac{1-2^{k_1(t-1)}}{1-2^{t-1}}\left(2^{t-1}\frac{1-2^{k_2(t-1)}}{1-2^{t-1}}+2^{k_2(t-1)-t} \right).\nonumber
\end{align}
Noting that by definition, for $\bX\in\Tx_\epsilon$ the number of phrases $M(\bX)$ and their composition $\Psi^{k_1,k_2}(\bX)$ is essentially deterministic, we can use Chernoff's bound~\cite{dembozeitouni} to obtain
\begin{align*}
\Pr(\bX\sim \bY|\Theta = \theta,\bX\in\Tx_\epsilon)&=\Pr\left(\sum_{i=1}^{M(\bX)}Z_i\geq \Theta n|\Theta = \theta,\bX\in\Tx_\epsilon\right)\nonumber\\
&\dotleq 2^{-n\Lambda^*(\theta)},
\end{align*}
where
\begin{align*}
\Lambda^*(\theta)=\max_{t>0}\left(\theta t-\frac{1}{5}\sum_{k_1=2}^\infty\sum_{k_2=1}^\infty2^{-(k_1+k_2-1)}\log \lambda_{Z_{k_1,k_2}}(t)\right).
\end{align*}
Substituting into~\eqref{condS} and~\eqref{expectedPerror}, and applying standard large deviations arguments, we obtain
\begin{align*}
-\lim_{n\to\infty}\frac{1}{n}\mathbb{E}_{\bX}\log\mathbb{E}_{\bY}\frac{\mathds{1}(\bX\sim\bY)}{\mathbb{E}_{\bX}\mathds{1}(\bX\sim\bY)}\geq g(d)
\end{align*}
where
\begin{align*}
g(d)\triangleq \min_{0\leq \theta\leq 1} D_2(\theta\|1-d)-(1-h(\langle\theta\rangle))+\Lambda^*(\theta)
\end{align*}
where $D_2(p\|q)$ is the binary relative entropy function. It follows that for a uniform i.i.d. input distribution,
\begin{align}
\lim_{n\rightarrow\infty}\frac{1}{n}I(\bX;\bY)\geq 1- h(\min (d,1/2))+g(d).\label{deletionBound}
\end{align}
Numerical evaluation of the term $g(d)$ reveals that it is greater than zero for all $d<1/2$. Thus,~\eqref{deletionBound} improves over Gallager's well know bound $1-h(d)$ \cite{gallager1961}. Recently, Rahmati and Duman~\cite{rd13} used a different technique to lower bound the mutual information for uniform i.i.d. inputs. For small values of $d$ their bound is better than~\eqref{deletionBound}, but for larger values of $d$ the right hand side of~\eqref{deletionBound} turns out to be greater than their bound. For example, for $d=0.2$ our bound improves on $1-h(0.2)$ by $\approx 0.0117$ bits (roughly $5\%$), whereas the improvement of~\cite{rd13} is negligible. See Figure~\ref{fig:del}.

\begin{figure}
\begin{center}
\includegraphics[width=0.8 \columnwidth]{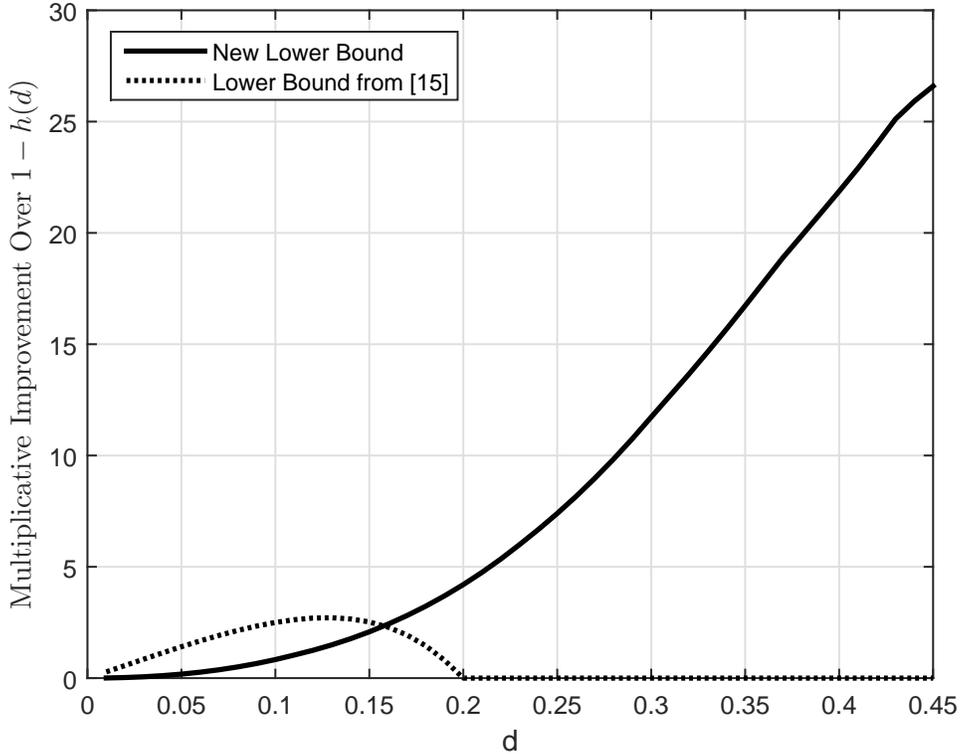}
\end{center}
\caption{The multiplicative improvement factor w.r.t. $1-h(d)$ attained by our lower bound on the mutual information for an i.i.d. uniform input . For comparison, we also plot the improvement the lower bound from~\cite{rd13} attains w.r.t. $1-h(d)$.}
\label{fig:del}
\end{figure}

\vspace{3pt}

\noindent\textit{\underline{Upper Bound for i.i.d Inputs}}

By Theorem \ref{thm:upper} we have in particular that
\begin{align}\label{eq:upper1}
I(\bX; \bY)\leq H(\bY) + \Expt_A\log \Expt_{\bX,\bY} \I(A\sim (\bX,\bY))
\end{align}
Let $\bX$ be an i.i.d. $\textrm{Bern}(q)$ input vector of length $n$ for some $q\leq \tfrac{1}{2}$. It can be shown that the length of $\bY$ is $\Theta\sim \textrm{Binomial}(n,1-d)$, and given its length, $\bY$ is i.i.d. $\textrm{Bern}(q)$. Thus,
\begin{align}\label{eq:HdelY}
\nonumber  \frac{1}{n}H(\bY) &= \frac{1}{n}\left(H(\bY|\Theta) + H(\Theta)\right) \\
  & = (1-d)h(q) + O\left(\frac{\log{n}}{n}\right)
\end{align}

The challenge is thus to evaluate the second term in~\eqref{eq:upper1}, which is given by
\begin{align}
\nonumber  \Expt_A\log \Expt_{\bX,\bY} \I(A\sim (\bX,\bY)) &= \Expt_A\log \Expt_{A'}\Expt_{\bX} \I(A\sim (\bX,A')) \\
  &=\Expt_A\log \Expt_{A'}\Expt_{\bX} \I(A(\bX) = A'(\bX)) \nonumber \\
  & =\Expt_A\log \Expt_{A'}\Pr(A(\bX) = A'(\bX)) \label{eq:upper2}
\end{align}
where $A'\sim P_A$ such that $(\bX,A'(\bX))\sim P_{\bX\bY}$. Note that here $\bX,A,A'$ are mutually independent.

Let us specifically choose $A$ as in Example \ref{ex:del}, namely we identify $A$ with a $\textrm{Bern}(1-d)$ i.i.d. vector of length $n$, and $A(\bX)$ corresponds to sampling $\bX$ in the location chosen by that vector. Asymptotically, we can assume without loss of generality that both $A$ and $A'$ are drawn uniformly over vectors of weight $n(1-d)$. This follows since for any given weight of $A$, the inner expectation w.r.t. $A'$ only increases by replacing the i.i.d. distribution with a uniform distribution over all vectors with the same weight. Furthermore, the outer expectation w.r.t. $A$ is asymptotically dominated by the uniform distribution over vectors of weight $n(1-d)$.

Let us define $S$ to be the action that chooses only the coordinates selected by $A'$ but not by $A$. Let $\overline{S}$ be the complementary action (that chooses only the remaining coordinates). Given any $A'$ and $A$, for any assignment of the values of $\bX$ in the coordinates chosen by $\overline{S}$, there is either a unique assignment $\phi(\overline{S}(\bX))$ of the values of $\bX$ in the coordinates chosen by $S$ that satisfies $A'(\bX)=A(\bX)$, or there is none. In the latter case, we set $\phi(\overline{S}(\bX))$ to an arbitrary value. Thus we can write
\begin{align*}
  \Pr(A(\bX) = A'(\bX)) &= \Pr\left(\bX\in \left\{\bx\in\{0,1\}^n : \I(A'(\bx) = A(\bx)\right\}\right) \\
  & \leq \Pr\left(\bX\in \left\{\bx\in\{0,1\}^n : S(\bx) = \phi(\overline{S}(\bx)) \right\}\right) \\
  & = \Pr\left(S(\bX) = \phi(\overline{S}(\bX))\right)\\
  & = \Expt \Pr\left(S(\bX) = \phi(\overline{S}(\bX)) \mid \overline{S}(\bX)\right)\\
  & \leq  \Expt\max_{\bu\in\{0,1\}^{|S|}}\Pr\left(S(\bX) = \bu  \mid \overline{S}(\bX)\right)\\
  & =  \Expt\max_{\bu\in\{0,1\}^{|S|}}\Pr\left(S(\bX) = \bu\right)\\
  & =  \max_{\bu\in\{0,1\}^{|S|}}\Pr\left(S(\bX) = \bu\right)\\
  & =  (1-q)^{|S|}\\
\end{align*}
Returning to~\eqref{eq:upper2} and using the above, we have
\begin{align*}
\Expt_A\log \Expt_{A'}\Pr(A(\bX) = A'(\bX)) \leq  \Expt_A\log \Expt_{A'}(1-q)^{|S|}
\end{align*}
where the only randomness is in $|S|$, which is a deterministic function of $A$ and $A'$. In particular, $|S|$ is the number of coordinates chosen by $A'$ and not by $A$. Since $A$ and $A'$ were assumed to be uniformly distributed over constant weight vectors of weight $(1-d)n$, then simple counting arguments show that for every action $a$
\begin{align*}
  \Pr(|S| &= \rho(1-d)n | A=a) = \frac{{ (1-d)n \choose (1-\rho)(1-d)n}\cdot {dn \choose \rho(1-d)n}}{{n \choose (1-d)n}} \\
  &\doteq 2^{n\left((1-d)h(\rho) + d\cdot h\left(\rho\frac{1-d}{d}\right) - h(d) \right)}
\end{align*}
Thus, maximizing over feasible values of $\rho$
\begin{align*}
\lim_{n\to\infty}\frac{1}{n}\Expt_A\log \Expt_{A'}\Pr(A(\bX) = A'(\bX)) \leq \max_{0\leq \rho \leq  \frac{d}{1-d}} (1-d)h(\rho) + d\cdot h\left(\rho\frac{1-d}{d}\right) - h(d) + (1-d)\rho\log(1-q)
\end{align*}
Plugging the above in~\eqref{eq:upper1} and using~\eqref{eq:HdelY}, we obtain the bound
\begin{align*}
 \lim_{n\to\infty}\frac{1}{n} I(\bX; \bY) \leq (1-d)h(q) -  h(d) + \max_{0\leq \rho \leq  \frac{d}{1-d}}\Gamma(\rho)
\end{align*}
where
\begin{align*}
\Gamma(\rho)\triangleq (1-d)\left(h(\rho) + \rho\log(1-q)\right) + d\cdot h\left(\rho\frac{1-d}{d}\right).
\end{align*}
We note that the maximization over $\rho$ can be solved directly by differentiation, and the maximizing value is
\begin{align*}
\rho^*=\frac{1-q}{2q(1-d)}\left(\sqrt{1+4d(1-d)\frac{q}{1-q}}-1\right),
\end{align*}
and we therefore have
\begin{align}
\lim_{n\to\infty}\frac{1}{n} I(\bX; \bY) \leq (1-d)h(q) -  h(d)+\Gamma(\rho^*).\label{newUpper}
\end{align}
In the limit of $d\to 1$ it is easy to see that $\rho^*\to d$, and direct substitution into~\eqref{newUpper} reveals that for $q=1/2$ the upper bound is smaller than $(1-d)^2$ for large $d$.
In~\cite{dsv12} it was shown that for an i.i.d. $\mathrm{Bern}(q)$ input process
\begin{align}
\lim_{n\to\infty}\frac{1}{n} I(\bX; \bY) \leq (1-d)\left(h(q)-2dq(1-q) \right).\label{dsvBound}
\end{align}
Our new upper bound is plotted in Figure~\ref{fig:del_upper} for $q=1/2$ along with the upper bound~\eqref{dsvBound} and the trivial upper bound $1-d$. It is seen that for this choice of $q$ our new bound is better than~\eqref{dsvBound} for all deletion probabilities.

We remark that although here we have only applied the bounds from Theorems~\ref{thm:main} and~\ref{thm:upper} for handling deletion channels, we expect a similar approach to yield improved results also for insertion channels.

\begin{figure}
\begin{center}
\includegraphics[width=0.8 \columnwidth]{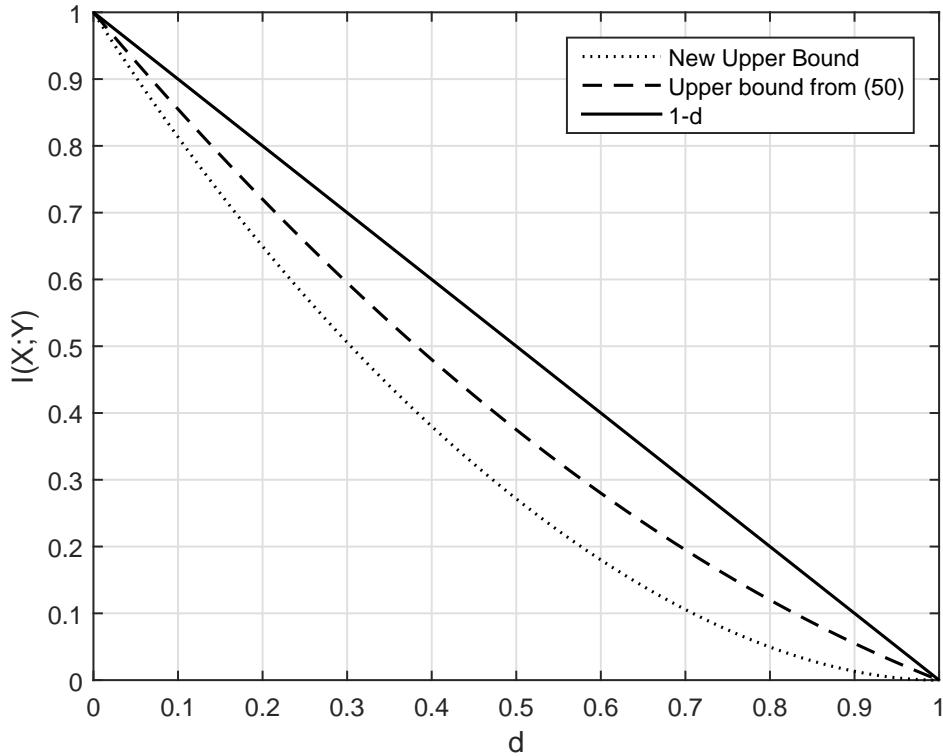}
\end{center}
\caption{Our new upper bound~\eqref{newUpper} plotted for $q=1/2$ along with the upper bound~\eqref{dsvBound} and the trivial upper bound $1-d$.}
\label{fig:del_upper}
\end{figure}

\subsection{Most Informative Boolean Function Conjecture}

Let $\bX$ be an $n$-dimensional binary vector uniformly distributed over $\{0,1\}^n$, and $\bY$ be the output of passing each component of $\bX$ through a binary symmetric channel with crossover probability $\alpha\leq1/2$. Let $f:\{0,1\}^n\to\{0,1\}$ be a boolean function. Following a recent conjecture by Courtade and Kumar~\cite{ck14}, there has been much interest in developing useful upper bounds on $I(f(\bX);\bY)$, where the ultimate goal is to prove that this quantity is maximized by the dictatorship function $f(\bX)=X_i$ for some $i\in[n]$. In this subsection, we apply Theorem~\ref{thm:upper} to derive the following novel upper bound.

\begin{theorem}
Let $\bX,\bZ,\bW\in\{0,1\}^n$ be three statistically independent random vectors, with the entries of $\bX$ i.i.d. $\textrm{Bern}(\tfrac{1}{2})$, and the entries of $\bZ$ and $\bW$ i.i.d. $\textrm{Bern}(\alpha)$. Let $\bY=\bX\oplus\bZ$. For any boolean function $f:\{0,1\}^n\to\{0,1\}$,
\begin{align}
I(\bY;f(\bX))\leq H(f(\bX)) + \Expt_\bW\log \Pr(f(\bX\oplus \bW\oplus \bZ)=f(\bX))
\end{align}
\label{thm:ck}
\end{theorem}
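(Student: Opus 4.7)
The plan is to apply Theorem~\ref{thm:upper} directly, but with the roles of the input and output reversed. That is, we view $f(\bX)$ as the ``output'' produced by a channel whose ``input'' is $\bY$. This reversal is the key move, because an action from $\bY$ to $f(\bX)$ is far more natural than an action from $f(\bX)$ to $\bY$: the former is immediately available from Bayes-symmetry of the BSC with uniform prior, while the latter would require inverting a lossy map.

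Specifically, for a BSC with a uniform input the posterior of $\bX$ given $\bY=\by$ is the distribution of $\by\oplus \bW$ with $\bW\sim\mathrm{Bern}(\alpha)^n$ i.i.d., independent of everything else. Hence the random map obtained by first drawing $\bW\sim\mathrm{Bern}(\alpha)^n$ independently of $(\bX,\bZ)$ and then setting $A_\bW(\by)\triangleq f(\by\oplus\bW)$ is a valid action: it is independent of the ``input'' $\bY$, and by construction the joint distribution of $(\bY,A_\bW(\bY))$ coincides with that of $(\bY,f(\bX))$. Thus $A_\bW$ is consistent with the channel $P_{f(\bX)\mid\bY}$ in the sense required by Theorem~\ref{thm:upper}.

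Now I invoke Theorem~\ref{thm:upper} with input $\bY$, output $f(\bX)$, and action $A_\bW$. The third term of~\eqref{eq:upper} is non-positive and can be discarded. For the second term, the adjacency indicator $\I(A_\bW\adj(\bY,f(\bX)))$ is simply $\I(f(\bY\oplus\bW)=f(\bX))$, so taking the expectation over $(\bY,f(\bX))$ with $\bW$ held fixed produces $\Pr(f(\bY\oplus\bW)=f(\bX))=\Pr(f(\bX\oplus\bZ\oplus\bW)=f(\bX))$, using $\bY=\bX\oplus\bZ$. Combining this with the fact that the leading $H(Y)$ in~\eqref{eq:upper} equals $H(f(\bX))$ yields exactly the stated bound.

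If any step is non-trivial it is the very first one: recognizing that the theorem should be applied ``backwards'' (from $\bY$ to $f(\bX)$) rather than forwards, and then spotting that BSC reversibility with a uniform prior hands us the clean action $A_\bW$ essentially for free. After that the computation is just bookkeeping and the bound falls out in one line.
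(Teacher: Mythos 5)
Your proposal is correct and follows essentially the same route as the paper: both apply Theorem~\ref{thm:upper} to the reverse channel from $\bY$ to $f(\bX)$ with the action $A(\bY)=f(\bY\oplus\bW)$, discard the non-positive third term, and evaluate the second term to get $\Expt_\bW\log\Pr(f(\bX\oplus\bW\oplus\bZ)=f(\bX))$. Your explicit justification that this action is consistent with $P_{f(\bX)\mid\bY}$ (via the BSC posterior under a uniform prior) is a detail the paper leaves implicit, but the argument is the same.
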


\begin{proof}
Identify the action that maps $\bY$ to $f(\bX)$ with drawing an i.i.d. vector $\bW$ with $\textrm{Bern}(\alpha)$ entries and setting $A(\bY)=f(\bY\oplus \bW)$. The bound~\eqref{eq:upper} reads (discarding the last term which is non-positive)
\begin{align}
I(\bY;f(\bX))&\leq H(f(\bX)) + \Expt_A\log \Expt_{\bY,f(\bX)} \I(A(\bY)=f(\bX))\nonumber\\
&= H(f(\bX)) + \Expt_\bW\log \Expt_{\bY,f(\bX)} \I(f(\bY\oplus\bW)=f(\bX))\nonumber\\
&= H(f(\bX)) + \Expt_\bW\log \Pr(f(\bX\oplus \bW\oplus \bZ)=f(\bX)),\label{IfBound}
\end{align}
as desired
\end{proof}

For a fixed $\bw\in\{0,1\}^n$, let us now express $\Pr(f(\bX\oplus \bw\oplus \bZ)=f(\bX))$. To this end, we use the standard isomorphism $0\to 1$, $1\to -1$, $\oplus\to \cdot$. Under this isomorphism we need to calculate $\Pr(f(\bX\cdot \bw\cdot \bZ)=f(\bX))$, where the products between vectors are taken componentwise. Recall~\cite{ODonnellBook} that $f:\{-1,1\}^n\to\{-1,1\}$ admits the Fourier-Walsh expansion
\begin{align}
f(\bx)=\sum_{S\subseteq [n]}\hat{f}(S)\prod_{i\in S}x_i,\label{eq:foursyn}
\end{align}
where
\begin{align}
\hat{f}(S)\triangleq\mathbb{E}\left(f(\bX)\prod_{i\in S}X_i \right),\label{eq:fouran}
\end{align}
and the expectation is taken w.r.t. to i.i.d. uniform distribution on $\{-1,1\}$. Let $f_\bw(\bX)=f(\bX\cdot \bw)$, and note that it immediately follows from~\eqref{eq:foursyn} that  $\hat{f}_\bw(S)=\hat{f}(S)\prod_{i\in S}w_i$. We have
\begin{align}
\Pr(f(\bX\cdot \bw\cdot \bZ)=f(\bX))&=\Pr(f_\bw(\bX\cdot \bZ)=f(\bX))\nonumber\\
&=\frac{1+\Expt\left(f(\bX)f_\bw(\bX\cdot \bZ)\right) }{2} \nonumber\\
&=\frac{1+\Expt\left(\sum_{S\subseteq [n]}\hat{f}(S)\prod_{i\in S}X_i\sum_{T\subseteq [n]}\hat{f}(T)\prod_{j\in T}X_j Z_j w_j\right) }{2} \nonumber\\
&=\frac{1+\sum_{S\subseteq[n]} \hat{f}^2(S)(1-2\alpha)^{|S|}\prod_{i\in S} w_i }{2},\label{eq:probFour}
\end{align}
where in~\eqref{eq:probFour} we have used the facts that $\mathbb{E}(X_iX_j)=\I(i=j)$ and $\mathbb{E}(Z_i)=(1-2\alpha)$ for any $i,j\in[n]$. Now, substituting~\eqref{eq:probFour} into~\eqref{IfBound} gives the following corollary.
\begin{corollary}
For any boolean $f:\{-1,1\}^n\to\{-1,1\}$,
\begin{align}
I(\bY;f(\bX)&\leq H(f(\bX))-1+\Expt_\bW\log\left(1+\sum_{S\subseteq[n]} \hat{f}^2(S)(1-2\alpha)^{|S|}\prod_{i\in S} W_i\right).\label{Wbound}
\end{align}
where $W_i$ are i.i.d. with $\Pr(W_i=-1)=1-\Pr(W_i=1)=\alpha$.
\label{cor:ck}
\end{corollary}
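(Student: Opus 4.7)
The plan is to combine Theorem~\ref{thm:ck} with the Fourier--Walsh calculation laid out in~\eqref{eq:foursyn}--\eqref{eq:probFour}. First I would switch to the $\{-1,1\}$ encoding via the isomorphism $0\mapsto 1$, $1\mapsto -1$, $\oplus\mapsto\cdot$, under which the bound of Theorem~\ref{thm:ck} becomes
\[
I(\bY;f(\bX))\leq H(f(\bX))+\Expt_\bW\log\Pr\bigl(f(\bX\cdot\bW\cdot\bZ)=f(\bX)\bigr),
\]
and the Fourier--Walsh expansion $f(\bx)=\sum_{S\subseteq[n]}\hat f(S)\prod_{i\in S}x_i$ becomes directly available.

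Next I would evaluate the agreement probability via the elementary identity $\I(a=b)=(1+ab)/2$ for $a,b\in\{-1,1\}$, giving
\[
\Pr\bigl(f(\bX\cdot\bw\cdot\bZ)=f(\bX)\bigr)=\tfrac{1}{2}\bigl(1+\Expt[f(\bX)\,f_\bw(\bX\cdot\bZ)]\bigr),
\]
where $f_\bw(\bx)\triangleq f(\bx\cdot\bw)$ and therefore $\hat f_\bw(S)=\hat f(S)\prod_{i\in S}w_i$. Expanding both copies of $f$ through the Fourier--Walsh basis, applying the orthonormality relation $\Expt[\prod_{i\in S}X_i\prod_{j\in T}X_j]=\I(S=T)$, together with independence of $\bX$ and $\bZ$ and the moment identity $\Expt Z_i=1-2\alpha$, produces exactly the expression~\eqref{eq:probFour}.

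Finally I would substitute~\eqref{eq:probFour} into the bound above. Pulling the factor of $\tfrac{1}{2}$ outside the logarithm contributes a constant $\Expt_\bW\log\tfrac{1}{2}=-1$ (in base-two logarithms, which is the paper's standing convention), and what remains is precisely $\Expt_\bW\log\bigl(1+\sum_{S\subseteq[n]}\hat f^2(S)(1-2\alpha)^{|S|}\prod_{i\in S}W_i\bigr)$, yielding the statement of the corollary. There is no substantial obstacle: the Fourier computation is a one-line routine, the encoding change is purely notational, and essentially all the work has already been done in the derivation leading up to~\eqref{eq:probFour}; the corollary is a direct substitution into Theorem~\ref{thm:ck}.
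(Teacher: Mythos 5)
Your proposal is correct and follows essentially the same route as the paper: apply Theorem~\ref{thm:ck}, pass to the $\{-1,1\}$ encoding, compute $\Pr(f(\bX\cdot\bw\cdot\bZ)=f(\bX))$ via the identity $\I(a=b)=(1+ab)/2$ and the Fourier--Walsh orthonormality to obtain~\eqref{eq:probFour}, and substitute, with the factor $\tfrac{1}{2}$ producing the $-1$. Nothing is missing.
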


We note that the upper bound from Theorem~\ref{thm:ck} and Corollary~\ref{cor:ck} are tight for the function $f(\bX)=X_i$. Thus, showing that the dictatorship function maximizes~\eqref{IfBound} or~\eqref{Wbound}, will settle the most informative boolean function conjecture~\cite{ck14}. Unfortunately, our attempts to prove the former were not successful.

%\vspace{3pt}
%\noindent\textit{\underline{Upper Bound for first order Markov Inputs}}

\begin{appendix}
\label{app:Zmgf}
Given that $K_{1i}=k_1$ and $K_{2i}=k_2$, we know that the $i$th phrase in the parsing of $\bX$ is of the form
\begin{align}
\underbrace{B\cdots B\overline{B}}_{k_1}\underbrace{\overline{B}\cdots\overline{B}B}_{k_2},\label{ithphrase}
\end{align}
where $B\sim\mathrm{Bern}(\tfrac{1}{2})$ and $\overline{B}\triangleq 1-B$. The r.v. $Z_i$ counts the number of bits in $\bY'$ that were matched by the greedy algorithm to bits in the $i$th phrase of $\bX$. Thus, conditioned on the event $K_{1i}=k_1,K_{2i}=k_2$, the r.v. $Z_i$ counts the number of bits from an i.i.d. uniform sequence (corresponding to the relevant bits in $\bY'$) that are matched by the greedy algorithm to bits in the phrase~\eqref{ithphrase}.

Let $W$ be the event that the first $k_1$ bits of the i.i.d. sequence are equal to $B$. Clearly, $\Pr(W)=2^{-k_1}$ and if $W$ occurs then $Z_i=k_1$. Let $T_1$ be the location of the first occurrence of $\overline{B}$ in the i.i.d. sequence, and let $T'_2$ be the location of the first occurrence of $B$ after $T_1$. Further, let $T_2=T'_2-T_1$. For example, if the sequence of i.i.d. bits is
\begin{align}
B \ B\overline{B} \ \overline{B} \ \overline{B} \ \overline{B} \ \overline{B} \ B\ldots,\nonumber
\end{align}
then $T_1=3$ and $T_2=5$, and if the sequence of i.i.d. bits is
\begin{align}
\overline{B} \ \overline{B} \ B\ldots,\nonumber
\end{align}
then $T_1=1$ and $T_2=2$. We further define the r.v.
\begin{align}
\tilde{T}_2=\begin{cases}
T_2 & T_2\leq k_2\\
k_2-1 & T_2> k_2
\end{cases}.\nonumber
\end{align}
Note that given $\overline{W}$ (the event that $W$ did not occur), we have $Z_i=T_1+\tilde{T}_2$. We have
\begin{align}
\Expt\left(2^{tZ_i} \ | \ {K_{1i}=k_1,K_{2i}=k_2}  \right)&=\Pr(W)\Expt\left(2^{tZ_i} \ | \ {K_{1i}=k_1,K_{2i}=k_2},W  \right)+\Pr(\overline{W})\Expt\left(2^{tZ_i} \ | \ {K_{1i}=k_1,K_{2i}=k_2},\overline{W}\right)\nonumber\\
&=2^{-k_1} 2^{t k_1}+\left(1-2^{-k_1}\right)\Expt\left(2^{t(T_1+\tilde{T}_2)} \ | \ \overline{W}\right)\nonumber\\
&=2^{-k_1} 2^{t k_1}+\left(1-2^{-k_1}\right)\Expt\left(2^{tT_1} \ | \ \overline{W}\right)\Expt\left(2^{t\tilde{T}_2} \right).\label{mgf}
\end{align}
The r.v.s $T_1$ and $T_2$ are statistically independent $\mathrm{Geometric}(\tfrac{1}{2})$, and therefore
\begin{align}
\Pr(T_1=m|\overline{W})=\begin{cases}
\frac{2^{-m}}{1-2^{-k_1}} & 1\leq m\leq k_1-1\\
0 & \text{otherwise}
\end{cases},\nonumber
\end{align}
and
\begin{align}
\Pr(\tilde{T}_2=m)=\begin{cases}
2^{-m} & 1\leq m\leq k_2, m\neq k_2-1\\
2^{-k_2}+2^{-m}\mathds{1}(k_2>1) & m=k_2-1 \\
0 & \text{otherwise}
\end{cases}.\nonumber
\end{align}
This gives
\begin{align}
\Expt\left(2^{tT_1}\right)&=\frac{1}{1-2^{-k_1}}\sum_{m=1}^{k_1-1}2^{-m}2^{tm}\nonumber\\
&=\frac{1}{1-2^{-k_1}}\frac{2^{t-1}}{1-2^{t-1}}\left(1-2^{k_1(t-1)}\right)
\label{mgf1}
\end{align}
and for $k_2>1$
\begin{align}
\Expt\left(2^{t\tilde{T}_2}\right)&=\sum_{m=1}^{k_2}2^{-m}2^{t m}+2^{-k_2}2^{t(k_2-1)}\nonumber\\
&=\frac{2^{t-1}}{1-2^{t-1}}\left(1-2^{k_2(t-1)}\right)+2^{k_2(t-1)-t}.
\label{mgf2}
\end{align}
Note that for $k_2=1$ we have $\Expt\left(2^{t\tilde{T}_2}\right)=\tfrac{1}{2}+\tfrac{1}{2}2^{-t}$, and~\eqref{mgf2} continues to hold. Substituting~\eqref{mgf1} and~\eqref{mgf2} into~\eqref{mgf} yields the desired expression.
\end{appendix}

\bibliographystyle{IEEEtran}
\bibliography{ofer_refs_master}

% Generated by IEEEtran.bst, version: 1.13 (2008/09/30)
\begin{thebibliography}{10}
\providecommand{\url}[1]{#1}
\csname url@samestyle\endcsname
\providecommand{\newblock}{\relax}
\providecommand{\bibinfo}[2]{#2}
\providecommand{\BIBentrySTDinterwordspacing}{\spaceskip=0pt\relax}
\providecommand{\BIBentryALTinterwordstretchfactor}{4}
\providecommand{\BIBentryALTinterwordspacing}{\spaceskip=\fontdimen2\font plus
\BIBentryALTinterwordstretchfactor\fontdimen3\font minus
  \fontdimen4\font\relax}
\providecommand{\BIBforeignlanguage}[2]{{%
\expandafter\ifx\csname l@#1\endcsname\relax
\typeout{** WARNING: IEEEtran.bst: No hyphenation pattern has been}%
\typeout{** loaded for the language `#1'. Using the pattern for}%
\typeout{** the default language instead.}%
\else
\language=\csname l@#1\endcsname
\fi
#2}}
\providecommand{\BIBdecl}{\relax}
\BIBdecl

\bibitem{dobrushin63}
R.~Dobrushin, ``General formulation of {Shannon's} main theorem in information
  theory,'' in \emph{Proc. of Amer. Math. Soc. Trans.}, vol.~33, 1963, pp.
  323--438.

\bibitem{mitzenmacher09}
M.~Mitzenmacher, ``A survey of results for deletion channels and related
  synchronization channels,'' \emph{Probability Surveys}, pp. 1--33, 2009.

\bibitem{dg01}
S.~N. Diggavi and M.~Grossglauser, ``On transmission over deletion channels,''
  in \emph{Proceedings of the Annual Allerton Conference on Communication
  Control and Computing}, vol.~39, no.~1, 2001, pp. 573--582.

\bibitem{dm06}
E.~Drinea and M.~Mitzenmacher, ``On lower bounds for the capacity of deletion
  channels,'' \emph{IEEE Transactions on Information Theory}, vol.~52, no.~10,
  pp. 4648--4657, Oct 2006.

\bibitem{ElGamalKimBook}
A.~Gamal and Y.~Kim, \emph{Network Information Theory}.\hskip 1em plus 0.5em
  minus 0.4em\relax Cambridge University Press, 2011.

\bibitem{ck14}
T.~Courtade and G.~Kumar, ``Which {B}oolean functions maximize mutual
  information on noisy inputs?'' \emph{IEEE Transactions on Information
  Theory}, vol.~60, no.~8, pp. 4515--4525, Aug 2014.

\bibitem{Dupuis_and_Ellis}
P.~Dupuis and R.~Ellis, \emph{A Weak Convergence Approach to the Theory of
  Large Deviations}.\hskip 1em plus 0.5em minus 0.4em\relax New York: Wiley,
  1997.

\bibitem{cover}
T.~Cover and J.~Thomas, \emph{Elements of Information Theory}.\hskip 1em plus
  0.5em minus 0.4em\relax John Wiley \& Sons, Inc., 1991.

\bibitem{dempster1977maximum}
A.~P. Dempster, N.~M. Laird, and D.~B. Rubin, ``Maximum likelihood from
  incomplete data via the em algorithm,'' \emph{Journal of the royal
  statistical society. Series B (methodological)}, pp. 1--38, 1977.

\bibitem{blahut1972computation}
R.~Blahut, ``Computation of channel capacity and rate-distortion functions,''
  \emph{IEEE transactions on Information Theory}, vol.~18, no.~4, pp. 460--473,
  1972.

\bibitem{arimoto1972algorithm}
S.~Arimoto, ``An algorithm for computing the capacity of arbitrary discrete
  memoryless channels,'' \emph{IEEE Transactions on Information Theory},
  vol.~18, no.~1, pp. 14--20, 1972.

\bibitem{csisz1984information}
I.~Csisz{\'a}r and G.~Tusn{\'a}dy, ``Information geometry and alternating
  minimization procedures,'' \emph{Statistics and decisions}, 1984.

\bibitem{bertsekas1999nonlinear}
D.~P. Bertsekas, \emph{Nonlinear programming}.\hskip 1em plus 0.5em minus
  0.4em\relax Athena scientific, 1999.

\bibitem{gallager1961}
R.~G. Gallager, ``Sequential decoding for binary channels with noise and
  synchronization errors,'' Tech. Rep., 1961.

\bibitem{rd13}
M.~Rahmati and T.~Duman, ``Bounds on the capacity of random insertion and
  deletion-additive noise channels,'' \emph{IEEE Transactions on Information
  Theory}, vol.~59, no.~9, pp. 5534--5546, Sept 2013.

\bibitem{dembozeitouni}
A.~Dembo and O.~Zeitouni, \emph{Large deviations techniques and
  applications}.\hskip 1em plus 0.5em minus 0.4em\relax Berlin:
  Springer-Verlag, 2010.

\bibitem{dsv12}
M.~Drmota, W.~Szpankowski, and K.~Viswanathan, ``Mutual information for a
  deletion channel,'' in \emph{Proceedings of ISIT}, July 2012, pp. 2561--2565.

\bibitem{ODonnellBook}
R.~O'Donnell, \emph{Analysis of {B}oolean functions}.\hskip 1em plus 0.5em
  minus 0.4em\relax Cambridge University Press, 2014.

\end{thebibliography}
%\bibliography{Bounding_Techniques_bib}

\end{document}